\newtheorem{theorem}{Theorem}
\theoremstyle{remark}
\def\Id{{\openone}}
\newcommand{\be}{\begin{equation}}
\newcommand{\ee}{\end{equation}}
\newcommand{\bea}{\begin{eqnarray}}
\newcommand{\eea}{\end{eqnarray}}
\newcommand{\bse}{\begin{subequations}}
\newcommand{\ese}{\end{subequations}}
\theoremstyle{plain}
\theoremstyle{plain}
\theoremstyle{plain}
\theoremstyle{plain}
\definecolor{myred}{RGB}{232,102,102}
\definecolor{myblue}{RGB}{187,187,255}
\definecolor{myorange0}{RGB}{252,226,5}
\definecolor{myorange0c}{RGB}{255,255,255}
\definecolor{myorange}{RGB}{255,165,0}
\definecolor{mygrey}{RGB}{105,105,105}
\definecolor{OliveGreen}{RGB}{85,107,47}
\definecolor{NavyBlue}{RGB}{0,0,128}
\definecolor{mygreen}{RGB}{34,139,34}
\definecolor{myY}{RGB}{220,255,203}
\definecolor{myYO}{RGB}{255, 220, 151}
\definecolor{mygreenc}{RGB}{150,50,50}
\newcommand{\btp}{\begin{tikzpicture}[baseline=(current  bounding  box.center), scale=.7] }
\newcommand{\btps}{\begin{tikzpicture}[baseline=(current  bounding  box.center), scale=.4] }
\newcommand{\etp}{\end{tikzpicture}}
\newcommand{\mcirc}{\mathbin{\scalerel*{\fullmoon}{G}}}
\newcommand{\mcircf}{\mathbin{\scalerel*{\newmoon}{G}}}
\def\Id{{\openone}}
\begin{document}

\title{Vanishing correlations in (bi)stochastic controlled circuits}

\author{Pavel Kos}
\affiliation{Faculty of Mathematics and Physics, University of Ljubljana, Jadranska ulica 19, 1000 Ljubljana, Slovenia}
\affiliation{Max-Planck-Institut f\"ur Quantenoptik, Hans-Kopfermann-Str. 1, 85748 Garching, Germany}
\author{Bruno Bertini}
\affiliation{School of Physics and Astronomy, University of Birmingham, Birmingham, B15 2TT, UK}
\author{Toma\v z Prosen}
\affiliation{Faculty of Mathematics and Physics, University of Ljubljana, Jadranska ulica 19, 1000 Ljubljana, Slovenia}

\begin{abstract}
We study the dynamics of circuits composed of stochastic and bistochastic controlled gates. This type of dynamics arises from quantum circuits with random controlled gates, as well as in stochastic circuits and deterministic classical cellular automata. We prove that stochastic and bistochastic controlled gates lead to two-point spatio-temporal correlation functions that vanish everywhere except when the two operators act on the same site. More generally, for multi-point correlations the two rightmost operators must act on the same site. We argue that autocorrelation, while hard to compute, typically decays exponentially towards a value that is exponentially small in the system size. Our results reveal a broad class of quantum systems that exhibit surprisingly simple correlation structures despite their complex microscopic dynamics. 
\end{abstract}
\maketitle

\section{Introduction}
%4,500 words limit for PRE letter
In recent years, the study of interacting quantum many-body dynamics emerged as a common theme across various communities~\cite{Polkovnikov2011, eisert2015quantum, calabrese2016introduction, essler2016quench, gogolin2016equilibration, abanin2019many, bertini2021finite,  bastianello2022introduction}, with quantum circuits coming to the fore as the central theoretical playground~\cite{fisher2022random, potter2022entanglement, bertini2025exactly}. Because of the unique analytical insight that they offer, these quantum dynamical systems in discrete space-time allowed us to understand universal aspects of, for instance, quantum chaos~\cite{chan2018solution, chan2018spectral, bertini2018exact, friedman2019spectral, bertini2021random, fritzsch2021eigenstate, fritzsch2025eigenstate}, quantum (deep) thermalization~\cite{piroli2020exact, ho2022exact, ippoliti2023dynamical}, and entanglement dynamics~\cite{nahum2017quantum, bertini2019entanglement, gopalakrishnan2019unitary, piroli2020exact, zhou2022maximal}. Indeed, while fully generic circuits often lead to complex, ergodic, and intractable behavior, specific constraints can give rise to surprising solvable structures, such as dual unitarity~\cite{bertini2019exact,gopalakrishnan2019unitary} (see also the recent review~\cite{bertini2025exactly}). Another way to make analytical progress in quantum circuits  --- following a logic that is close to that of random matrix theory --- is to average over individual instances, considering various forms of random unitary circuits (RUC)~\cite{fisher2022random, potter2022entanglement}. 

One way to go beyond ergodic systems is to consider kinetically constrained models~\cite{fredrickson1984kinetic, Ritort2003Glassy, jackle1991hierarchically}, originally developed to describe the slow dynamics of classical glasses~\cite{Biroli_2013}. Their quantum versions, such as the PXP model~\cite{2018scars}, have recently attracted substantial attention as they can exhibit weak ergodicity breaking~\cite{serbyn2021quantum}. Another important model in the kinetically constrained class is the (quantum) East model~\cite{horssen2015dynamics,  roy2020strong, brighi2022hilbert,geissler2022slow,klobas2023exact, pancotti2020quantum}, which admits a quantum circuit realization in terms of controlled gates~\cite{bertini2024localised} and, in turn, contains a solvable point with accessible thermalization dynamics~\cite{bertini2024exact}. Here we take inspiration from the latter and investigate the general properties of circuits whose dynamics (classical or quantum) is determined by controlled gates. Specifically, we consider circuits of controlled gates where the operations are stochastic or bistochastic, and show how this leads to vanishing correlation functions except for the autocorrelation. Then, we demonstrate that this type of dynamics appears naturally both in random quantum circuits of controlled gates and in the context of classical dynamics.

This paper is organized as follows. In Sec.~\ref{sec:setting}, we define the system and the diagrammatic notation used. In Sec.~\ref{sec:results}, we introduce the key controlled-stochastic and controlled-bistochastic conditions and state our main results. Sec.~\ref{sec:simplifications} contains a simple graphical proof of our main result. In Sec.~\ref{sec:cases}, we present specific examples where the controlled-stochastic and controlled-bistochastic conditions apply, including random quantum controlled gates and classical cellular automata. Sec.~\ref{sec:correlations} is dedicated to the analysis of the nonzero autocorrelation function. Finally, we conclude with a discussion and outlook.

\section{Setting}
\label{sec:setting}
{ 
In this work, we consider a setting which captures quantum, classical deterministic as well as stochastic dynamics.
In the quantum regime, the system is described by a ray in a Hilbert space $\mathcal{H}$. Will consider a one-dimensional system defined on a discrete lattice hosting $2L$ dynamical variables with $q$ internal states, thus the Hilbert space of interest is $\mathcal{H}=(\mathbb{C}^q)^{\otimes{2L}}$ with dimension $q^{2L}$. The dynamics are implemented by unitary operations, which update the state of the system. 

When considering classical deterministic dynamics we only consider a subset of the states in $\mathcal{H}$. Specifically, we focus on the so called computational basis states, i.e., classical configurations of the lattice. They can be represented as strings of length $2L$ drawn from an alphabet of size $q$ (e.g., $s \in \{0, 1, \dots, q-1\}^{2L}$). Under deterministic dynamics, a given string is updated to another string. This is implemented by a permutation matrix, thus describing reversible classical computation or deterministic cellular automata.

In the context of stochastic dynamics, we describe the system as a statistical mixture of the aforementioned classical states. The system is represented by a probability vector $\ket{p}$ of length $q^{2L}$, where each non-negative element corresponds to the probability of finding the system in that computational basis state. We consider Markovian evolution, which is described by transition matrices that map probability vectors to probability vectors --- namely, stochastic matrices. To preserve probability normalization, the columns of these matrices must sum to one.

The advantage of encoding the system in a $q^{2L}$-dimensional vector evolving under the dynamics given by a matrix of size $q^{2L} \times q^{2L}$ is that it can capture all three settings above. The specific physical setting is determined by the choice of the evolution matrix --- whether unitary, permutation, or stochastic. Not only can we treat these disparate regimes within a single framework: this perspective also enables mapping one type of dynamics onto another. For example, quantum dynamics can be rigorously mapped to stochastic dynamics upon taking ensemble averages~\cite{fisher2022random}.
}

The evolution operator $\mathcal{U}$ is composed of two-site gates organized in a brickwork pattern
\be
\mathcal{U} =\mathbb{U}_e\mathbb{U}_o,\,\, \mathbb{U}_e=\bigotimes_{x=0}^{L-1} U_{x,x+1/2},\,\,\mathbb{U}_o=\bigotimes_{x=1}^{L} U_{x-1/2,x}\,.
\ee
Here $U_{x,y}$ means that the two-body matrix $U$ acts on the variables at position $x$ and $y$. In the following we take the local gate $U$ to be either unitary or stochastic depending on the choice of dynamics (quantum versus classical). {  While our results can be easily extended for arbitrary $U$ varying in space and time, we assume uniform gates throughout this manuscript for the sake of simplicity. Then the evolution up to time $t$ can be expressed as $\mathcal{U}(t)=\mathcal{U}^t$.
} Unless specifically stated, from now on we will focus on the thermodynamic limit $L\to\infty$.

To represent relevant quantities in this system, we use a convenient diagrammatic notation based on that of tensor networks, see, e.g.\ Ref.~\cite{cirac2020matrix}. We focus on two-site controlled gates, represented as a triangle with four legs
\be
U=
 \begin{tikzpicture}[baseline={([yshift=-0.6ex]current bounding box.center)},scale=0.65]
    \prop{0}{0}{colU}
  \end{tikzpicture}
  =
%\begin{tikzpicture}[baseline={([yshift=-0.6ex]current bounding box.center)},scale=0.65]
  %\Xe{4.7}{-.5}
 %\draw (-1.4+4.7,-.5) node {$U(a,\tau)=$};
%\end{tikzpicture}= 
\sum_{i=0}^{q-1} \ketbra{i}\otimes u_i\,,
  \label{eq:gate}
  \ee
   where the single-site gate $u_i$ is applied to the target site only if the control site is in the state $\ket{i}$, with $i=0,\ldots, q-1$.

\section{Strategy and results}
\label{sec:results}
 
Let us consider the case in which the controlled single-site gates $u_i$ are stochastic, i.e. $u_i\ket{-}=\ket{-}$, where we introduced the ``flat state'' { 
\be
\ket{-}=\frac{1}{\sqrt{q}}\sum_{i=0}^{q-1} \ket{i}=
\begin{tikzpicture}[baseline={([yshift=-0.6ex]current bounding box.center)},scale=0.65]
    \gridLine{0.5}{0}{0.5}{0.5}
    \MEh{0.5}{0}
  \end{tikzpicture}
  %=
%  \begin{tikzpicture}[baseline={([yshift=-0.6ex]current bounding box.center)},scale=0.65]
 %   \gridLine{0.25}{-0.25}{0.5}{-0.5}
%    \MEld{0.5}{-0.5}
%  \end{tikzpicture}
,
\ee
where the bar can point in any direction for convenience of the graphics and still depict the same (flat) state.
Note that since $\ket{- \ldots -}$ is an equal superposition of all states, it corresponds to the flat distribution in a stochastic setting (the classical maximum entropy state). Therefore, it is an invariant state under stochastic dynamics, $\mathcal{U}\ket{- \ldots -}=\ket{- \ldots -}$. In the quantum setting, this state is connected to the vectorized identity matrix, which represents the maximally mixed (maximum entropy) state (see Eq.\eqref{eq:circ} and the  discussion following it). 
It is also useful to denote by $\ket{\mcircf}$ an arbitrary normalized state from the $q-1$ dimensional subspace orthogonal to $\ket{-}$.
%defined as 
%\be
%\label{eq:circf} 
%\ket*{\mcircf^{(k)}}=\frac{1}{\sqrt{q}}\sum_{i=0}^{q-1} d^{(k)}_{i}\ket{i}, \quad \sum_{i=0}^{q-1}d^{(k)}_i=0,
%\ee
%with $k=1,\ldots,q-1$~\footnote{The condition in Eq.~\eqref{eq:circf} does indeed admit $q-1$ independent solutions}. From now on we indicate by $\ket{\mcircf}$ any state expressed as a linear combination of $\ket*{\mcircf^{(k)}}$.  

We are now in a position to introduce our basic relations: for stochastic $u_i$, the following \emph{controlled-stochastic} condition is satisfied{ 
\be
\label{eq:CS}
%U\ket{-}_2 = (\Id)_1 \ket{-}_2
U\ket{\psi-} = \ket{\psi-} \forall \psi
\iff
 \begin{tikzpicture}[baseline={([yshift=-0.6ex]current bounding box.center)},scale=0.65]
    \prop{0}{0}{colU}
    \MEld{0.5}{-0.5}
  \end{tikzpicture} = 
   \begin{tikzpicture}[baseline={([yshift=-0.6ex]current bounding box.center)},scale=0.65]
     \gridLine{0}{-0.5}{0}{0.5}
     \gridLine{0.5}{0}{0.5}{0.5}
     %\sCircle{0}{.0}{yellow}
         \MEh{0.5}{0}
  \end{tikzpicture}\,. 
  \ee
  }
%\pk{Is this better? Then we can avoid: The subscripts denote the sites (1 or 2) where the variables are in a given state or where the matrices act.Also in Eq. 6} 
When the controlled gates $u_i$ are also bistochastic ($\bra{-}u_i=\bra{-}$), we additionally have the \emph{controlled-bistochastic} condition
{ 
  \be
  \label{eq:BCS}
 %\bra{-}_2 U = (\Id)_1 \bra{-}_2
  \bra{\psi-} U =\bra{\psi-}  \forall \psi
 \iff
   \begin{tikzpicture}[baseline={([yshift=-0.6ex]current bounding box.center)},scale=0.65]
    \prop{0}{0}{colU}
    \MErd{0.5}{0.5}
  \end{tikzpicture} = 
   \begin{tikzpicture}[baseline={([yshift=-0.6ex]current bounding box.center)},scale=0.65]
     \gridLine{0}{-0.5}{0}{0.5}
     \gridLine{0.5}{0}{0.5}{-0.5}
      \MEh{0.5}{0}
       %\sCircle{0}{.0}{yellow}
         \MEh{0.5}{0}
  \end{tikzpicture}. 
\ee
}
These conditions arise in many different physical settings, for instance, for \emph{deterministic} controlled cellular automata, but also --- as we show in Sec.~\ref{sec:Tostochastic} --- for one-replica averages in circuits of \emph{random} quantum {controlled gates}. In fact, Eqs.~\eqref{eq:CS} and \eqref{eq:BCS} also hold for gates that are more general than those in Eq.~\eqref{eq:gate}: in App.~\ref{app:general_gates}, we prove that the gate satisfies Eqs.~\eqref{eq:CS} and \eqref{eq:BCS} if and only if it can be written as 
\be
U = \sum_{\alpha=1}^{r} c_\alpha \otimes u_\alpha, \qquad r \le q^2+1,
\label{eq:general_form}
\ee
with $u_i$ bistochastic and $\sum_\alpha c_\alpha = \Id$. Moreover, as we discuss in App.~\ref{sec:appgencond}, Eqs.~\eqref{eq:CS} and \eqref{eq:BCS} can be generalized by substituting the identity on the r.h.s. Eq.~\eqref{eq:CS} and Eq.~\eqref{eq:BCS} by a general bistochastic matrix. 
This leads to the same type of simplifications as for the case discussed here.

Our main result is to show that Eqs.~\eqref{eq:CS} and \eqref{eq:BCS} provide strong constraints on the structure of dynamical correlations. Namely we have:
\begin{theorem}[Vanishing of two-point correlation functions]
\label{thm1}
   Consider a brickwork circuit of two-site gates $U$ and let  
    \be
    \label{eq:correlations}
    \!\!\!\! C(x,t)\!\!=\!\! \bra{- \dots - \mcircf_x - \dots -}\!\mathcal{U}^t\!\ket{- \dots - \mcircf_0 - \dots -}\!,
    \ee
    be the infinite-temperature two-point correlation functions of diagonal traceless observables at $(0,0)$ and $(x,t)$ (subscripts correspond to the operators' position). Then the following facts hold
    \begin{itemize}
        \item[(i)] Eq.~\eqref{eq:CS} $\implies$ $C(x,t)=0$ for $x>0$;
        \item[(ii)] Eq.~\eqref{eq:BCS} $\implies$ $C(x,t)=0$ for $x<0$;
        \item[(iii)] Eq.~\eqref{eq:CS} and Eq.~\eqref{eq:BCS} $\implies$ $C(x,t)=0$ for $x\neq 0$. 
    \end{itemize}
\end{theorem}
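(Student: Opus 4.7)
The plan is to prove the theorem by iteratively applying the controlled-stochastic and controlled-bistochastic identities (Eqs.~\eqref{eq:CS} and~\eqref{eq:BCS}) to collapse the two-dimensional tensor network representing $C(x,t)$, starting from its right edge. The picture is that Eq.~\eqref{eq:CS} is a local rewrite rule: wherever a gate has $\ket{-}$ on its bottom-right (target) leg, it may be replaced by an identity wire on its bottom-left (control) leg together with a $\ket{-}$ on the top-right. Dually, Eq.~\eqref{eq:BCS} lets us perform the analogous rewriting from the top down: wherever $\bra{-}$ sits on the top-right of a gate, that gate collapses to an identity on the top-left plus a $\bra{-}$ hanging down on the bottom-right. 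Combined with the brickwork geometry, these rules generate a ``staircase'' of simplifications from the corners of the diagram, stopping only when the advancing front meets a $\ket{\mcircf}$ or $\bra{\mcircf}$ insertion.

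To prove (i), I would induct on $t$. At $t=0$ the bottom state is a product with $\ket{\mcircf_0}$ at site $0$ and $\ket{-}$ elsewhere. When $\mathbb{U}_o$ acts first, every gate whose target sits at an integer position $\geq 1$ has $\ket{-}$ on its bottom-right leg, so Eq.~\eqref{eq:CS} trivializes it: the target leg outputs $\ket{-}$ and the control leg passes through as the identity. The only exceptional gate is $U_{-1/2,0}$, whose right input is $\ket{\mcircf_0}$; here Eq.~\eqref{eq:CS} does not apply and the gate can spread $\mcircf$ leftward, but the entire region of sites $\geq 1/2$ is unaffected. Next $\mathbb{U}_e$ acts: the gate $U_{0,1/2}$ has right input $\ket{-}_{1/2}$, so Eq.~\eqref{eq:CS} forces it to act as identity on site $0$, and every other gate with target site $\geq 3/2$ trivializes likewise. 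Proceeding inductively, one concludes that for every $t$
\be
\mathcal{U}(t)\ket{-\dots-\mcircf_0-\dots-}=\ket{\phi_t}_{\leq 0}\otimes\ket{-,-,\dots}_{\geq 1/2}
\ee
for some state $\ket{\phi_t}$ supported on sites $\leq 0$. Taking the overlap with $\bra{-\dots-\mcircf_x-\dots-}$ for $x>0$ then produces at least one factor $\braket{\mcircf|-}=0$ from Eq.~\eqref{eq:circf}, yielding $C(x,t)=0$.

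Part (ii) is strictly parallel, with bras and kets interchanged: rather than propagating $\ket{-}$'s upward from the bottom-right via Eq.~\eqref{eq:CS}, we propagate $\bra{-}$'s downward from the top-right via Eq.~\eqref{eq:BCS}. The analogous staircase shows that the pulled-back top state factorizes with $\bra{-}$'s on all sites strictly to the right of $x$, and for $x<0$ one of those sites is $0$, so the factor $\braket{-|\mcircf}_0=0$ kills the correlator. Part (iii) is the immediate conjunction of (i) and (ii), excluding every $x\neq 0$.

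The main obstacle I anticipate is careful bookkeeping of the brickwork offsets: one has to check that the orientation of $U$ (control on the left, target on the right) is compatible with Eqs.~\eqref{eq:CS} and~\eqref{eq:BCS} so that the CS/BCS staircase indeed propagates to the right and upward (resp.\ downward), and that the single non-simplifying gate at each time step cannot leak amplitude across the separating front. Once that geometric check is in place, the argument reduces to a finite sequence of local rewrites with no estimates, so all the substance sits in identifying the correct ``light cone'' on which CS and BCS act.
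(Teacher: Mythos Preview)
Your proof is correct and follows essentially the same route as the paper's: both arguments repeatedly apply the local rewrite rule Eq.~\eqref{eq:CS} (resp.\ Eq.~\eqref{eq:BCS}) to propagate $\ket{-}$ states upward (resp.\ $\bra{-}$ states downward) along the right side of the tensor network until a $\ket{-}$ hits the $\bra{\mcircf}$ insertion and produces $\braket{\mcircf|-}=0$. The paper phrases this as a diagrammatic ``peeling'' of gates starting from the rightmost corner of the diamond (having first reduced to the causal diamond using full bistochasticity), while you phrase the identical mechanism as an induction on $t$ showing that $\mathcal{U}(t)\ket{\cdots\mcircf_0\cdots}$ remains a product with $\ket{-}$ on every site to the right of~$0$; your formulation has the minor advantage that part~(i) manifestly uses only Eq.~\eqref{eq:CS} without needing the preliminary light-cone simplification.
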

As we discuss in Sec.~\ref{sec:correlations}, the autocorrelation function $C(0,t)$ is in general hard to compute, and can exhibit different behaviors: it generically decays exponentially to something exponentially small in the system size, but it can also vanish immediately or remain constant depending on the specific choice of $\{u_i\}$.\\

More generally, we can make a similar statement for $n$-point correlation functions { 
\begin{align}
C(x_1,t_1,\ldots,x_n,&t_n)= \bra{ - \ldots - }O_{x_n}
\mathcal{U}(t_n-t_{n-1}) \dots \notag\\
&O_{x_2}\mathcal{U}(t_2-t_1)O_{x_1}\mathcal{U}(t_1)\ket{- \ldots -},
\end{align}
where $O_i$ is the application of diagonal traceless observable at position $i$ and $\mathcal{U}(t_i-t_{i-1})=\mathcal{U}^{t_i-t_{i-1}}$ implements the evolution from $t_{i-1}$ to $t_i$.
Note that since $\ket{- \ldots -}$ is an invariant state, the above equation reduces to Eq.~\eqref{eq:correlations} for $n=2$.
}

\begin{theorem}[Vanishing-multi-point correlation functions]
\label{thm2}
If both Eq.~\eqref{eq:CS} and Eq.~\eqref{eq:BCS} hold, the $n$-point space-time correlation function $C(x_1,t_1,\ldots,x_n,t_n)$ vanishes whenever the set $\{x_1,\ldots, x_n\}$ has a unique maximal value. 
\end{theorem}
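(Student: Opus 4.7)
My plan is to extend the graphical wedge-contraction argument behind Theorem~\ref{thm1}(iii) to the multi-point setting. The decisive observation is that if $x^{*}=\max\{x_{1},\dots,x_{n}\}$ is attained uniquely---say at index $j$---then the entire spacetime strip at positions strictly greater than $x^{*}$ contains only evolution gates bracketed between the flat boundaries $\ket{-}$ (bottom) and $\bra{-}$ (top), with no observable insertions. This right wedge can therefore be fully simplified using Eqs.~\eqref{eq:CS} and~\eqref{eq:BCS}, after which the column at $x^{*}$ reduces to an effectively single-site diagram in which $O_{x^{*}}$ is sandwiched between $\bra{-}$ and $\ket{-}$ and thus vanishes by tracelessness.

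Concretely, I would first apply Eq.~\eqref{eq:BCS} iteratively from the top-right of the right wedge, pushing each flat state $\bra{-}$ downward through the gates, and Eq.~\eqref{eq:CS} iteratively from the bottom-right, pushing each $\ket{-}$ upward; column by column the wedge collapses to trivial flat-state propagation. In particular, every gate coupling column $x^{*}$ to its right neighbour reduces to $\Id_{x^{*}}$ tensored with a flat state on the right leg, so column $x^{*}$ becomes dynamically decoupled from all positions $>x^{*}$. In the resulting reduced diagram, the only gates acting nontrivially on $x^{*}$ are those coupling it to its left neighbour $x^{*}-1/2$; crucially, in all such gates $x^{*}$ occupies the right leg regardless of whether $x^{*}$ is an integer or half-integer site, so Eqs.~\eqref{eq:CS} and~\eqref{eq:BCS} apply directly to them. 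I would then apply CS in temporal order to the layers below $t_{j}$, starting from the initial $\ket{-}_{x^{*}}$---each such gate reduces to $\Id_{x^{*}-1/2}\otimes\ket{-}_{x^{*}}$ and so preserves the flat state at $x^{*}$ up to $t_{j}^{-}$---and symmetrically apply BCS in reverse temporal order to the layers above $t_{j}$, propagating the final $\bra{-}_{x^{*}}$ backwards down to $t_{j}^{+}$. At time $t_{j}$ the observable $O_{x^{*}}$ is then flanked by $\bra{-}_{x^{*}}$ above and $\ket{-}_{x^{*}}$ below, giving $\bra{-}_{x^{*}}O_{x^{*}}\ket{-}_{x^{*}}=q^{-1}\sum_{i} d_{i}=0$ by tracelessness. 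The remaining portion of the diagram---the evolution restricted to positions $\le x^{*}-1/2$ with the other $n-1$ observables inserted---factors out as a finite quantity, since every gate that used to couple it to column $x^{*}$ has been reduced to the identity on its left leg, and it multiplies the vanishing column-$x^{*}$ contribution, so the full correlator is zero.

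The main obstacle I anticipate is ensuring that the right-wedge contractions pass cleanly through the time slice containing $O_{x^{*}}$, which sits exactly on the boundary of the wedge. This is not a genuine difficulty: since $O_{x^{*}}$ is supported only on site $x^{*}$ and never on $x^{*}+1/2$, no right-wedge contraction ever has to cross it, and the column-$x^{*}$ reductions split cleanly into a pre-$t_{j}$ CS part and a post-$t_{j}$ BCS part. A secondary bookkeeping issue---the parity of $x^{*}$ combined with the alternating even/odd layering of the brickwork---simply swaps whether the left-coupling gate belongs to the even or odd layer but never changes the fact that $x^{*}$ occupies its right leg, so the argument is insensitive to these choices.
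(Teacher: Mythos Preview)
Your proposal is correct and follows essentially the same approach as the paper: both arguments use Eqs.~\eqref{eq:CS} and~\eqref{eq:BCS} to propagate the flat states inward from the right boundary (and from the top and bottom at positions beyond the second-rightmost observable) until they sandwich the uniquely rightmost observable between $\bra{-}$ and $\ket{-}$, yielding zero by tracelessness. Your write-up is considerably more explicit about the order of contractions, the decoupling of column~$x^{*}$, and the parity bookkeeping than the paper's terse paragraph, but the underlying idea is identical.
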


Note that if multi-point correlation functions factorize due to locality, the above theorem applies to each of the clusters. We prove the theorems in the next section. 

\section{Simplifications of correlation functions}
\label{sec:simplifications}

To prove Thms.~\ref{thm1} and \ref{thm2}, let us begin by considering the two-point correlation functions of the stochastic model. The correlations are defined as in Eq.~\eqref{eq:correlations}. Graphically they can be represented as 
\be 
C(x,t)=
\begin{tikzpicture}[baseline={([yshift=-0.6ex]current bounding box.center)},scale=0.5]
      \foreach \y in {0,...,3}{
      \foreach \x in {0,...,6}{
      \prop{\x-\y}{\x+\y}{colU}}
      }
      \foreach \x in {0,...,6}{
      \MEld{\x+.5}{\x-0.5}
       \MEld{\x+.5-4}{\x-0.5+4}
      }
      \foreach \y in {0,...,2}{
      \MErd{\y-.5-3}{-\y-0.5+3}
      \MErd{\y-.5+5}{-\y-0.5+9}
      }
      \GS{-0.5}{-0.5}
       \GS{-0.5+4}{-0.5+10},
\end{tikzpicture},
\label{eq:dynamicalcorrelationave}
\ee
where we took $x=2$, $t=5$, and have already simplified the light cones using the bistochasticity of the gates ($U\ket{--}=\ket{--}, \bra{--}U=\bra{--}$).

We prove Thm.~\ref{thm1} by further simplifying this type of diagram. We use Eq.~\eqref{eq:CS}, starting from the rightmost gate, and simplify until
\be 
C(x,t)=
\begin{tikzpicture}[baseline={([yshift=-0.6ex]current bounding box.center)},scale=0.5]
      \foreach \y in {0,...,3}{
      \foreach \x in {0,...,6}{
      \pgfmathtruncatemacro{\test}{\x - \y < 3}
        \ifnum\test=1
            \prop{\x-\y}{\x+\y}{colU}
        \fi
      }}
      \prop{3}{9}{colU}
      \MEld{3+.5}{9-0.5}
      \foreach \x in {0,...,6}{
       \MEld{\x+.5-4}{\x-0.5+4}
       }
       \foreach \x in {1,...,3}{
       \MEld{2+.5}{2*\x-0.5+2}
       \MErd{2+.5}{2*\x-1-0.5+2}
      }
      \foreach \x in {0,...,2}{
      \MEld{\x+.5}{\x-0.5}
      }
      \foreach \y in {0,...,2}{
      \MErd{\y-.5-3}{-\y-0.5+3}
      }
      \GS{-0.5}{-0.5}
       \GS{-0.5+4}{-0.5+10},
\end{tikzpicture}.
\label{eq:dynamicalcorrelationave2}
\ee
Then the topmost gate simplifies in the same way, resulting in $\bra{-}\ket{\mcircf}=0$ so the correlation vanishes. 

The same happens if the topmost operator is on the left of the initial operator, this time by using Eq.~\eqref{eq:BCS}, which implies that the bottom operator contracts with $\ket{-}$, again resulting in vanishing correlations. The only nonzero two-point correlations are thus the autocorrelation functions.

The same argument can be extended to multi-point correlation functions. In particular if both Eq.~\eqref{eq:CS} and Eq.~\eqref{eq:BCS} hold,
all states $\ket{-}$ to the right from the second rightmost operator at the position $x_{n-1}$ can propagate upwards and downwards. If they meet the rightmost operator, the correlations vanish. Therefore, for the correlations not to vanish the two rightmost operators must act on the same site. Namely, $\{x_1,\ldots, x_n\}$ must have two maxima, i.e. the highest value in the set appears exactly twice.

\section{Examples}
\label{sec:cases}

Let us discuss two important families of gates giving rise to the conditions in Eqs.~\eqref{eq:CS} and~\eqref{eq:BCS}.

\subsection{Random quantum controlled gates}
\label{sec:Tostochastic}

Consider quantum random controlled gates for qubits ($q=2$) written as
\be
W = \ketbra{0}\otimes \Id + \ketbra{1} \otimes w,
\ee
where $w$ are random unitary matrices acting on a single qubit, sampled independently at each space-time point from the Haar measure. We also introduce the ``folded" picture~\cite{cirac2020matrix}, where the bra and ket of the state are arranged on top of each other. Then the evolution superoperator consists of the gates  $W \otimes W^*$. We represent by 
\be
\ket{\mcirc}=\frac{1}{\sqrt{2}}\sum_{i=0}^{1} \ket{i}\otimes\ket{i},
\label{eq:circ}
\ee
the maximally entangled state between the two superimposed copies, which corresponds to the identity operator under Choi's isomorphism. 

Under these conditions, the averaged folded gate is given by 
\be
\begin{aligned}
\!\!\!\!\mathbb{E}[W \otimes W^*] =& P_{00} \otimes (\Id \otimes \Id) + P_{01} \otimes \mathbb{E}[\Id \otimes w^*] \\
&+ P_{10} \otimes \mathbb{E}[w \otimes \Id] + P_{11} \otimes \mathbb{E}[ w \otimes w^*],
\end{aligned}
\ee
where we introduced the projectors $P_{ij} = \ketbra{i}\otimes\ketbra{j}$ acting on the two copies of the control space. 

Since the Haar average of $w$ is zero, the cross-terms vanish and we are left with 
\be
\mathbb{E}[W \otimes W^*] = P_{00} \otimes (\Id \otimes \Id) + P_{11} \otimes \mathbb{E}[w \otimes w^*].
\ee
The Haar average of $\mathbb{E}[w \otimes w^*]$ is $1/2\sum_{i,j=0}^1 \ket{ii}\bra{jj}=\ketbra{\mcirc}$~\cite{Nielsen2010}, therefore
  \be
  \mathbb{E}[W \otimes W^*] = P_{00} \otimes \Id_4+ P_{11} \otimes \ketbra{\mcirc}.
  \ee
Now note that the projectors $P_{ii}$ project to the same bra and ket state and, due to the brickwork structure, they are applied to each leg of the system. This means that we can reduce the folded space from four to two dimensional Hilbert space spanned by $\ket{0}\equiv\ket{00}$ and $\ket{1}=\ket{11}$ (the states $\ket{01},\ket{10}$ are projected out). In the smaller space, $\ketbra{\mcirc}$ is projected to $\ketbra{-}$.  This reduces the circuit to an effective classical stochastic one characterized by the gate
\begin{align}
   U = \begin{tikzpicture}[baseline={([yshift=-0.6ex]current bounding box.center)},scale=0.65]
    \prop{0}{0}{colU}
  \end{tikzpicture} 
= P_0 \otimes \Id_2 + P_1 \otimes \ketbra{-},
  \label{eq:projectedavegate2}
\end{align}
or in terms of the Eq.~\eqref{eq:gate}, by the bistochastic matrices
\be
u_0=\Id,\qquad  u_1=\frac{1}{2}
\begin{pmatrix}
    1 & 1 \\
    1 & 1
  \end{pmatrix}.
\ee
This gate satisfies controlled-bistochastic conditions in Eqs.~\eqref{eq:CS} and \eqref{eq:BCS}.

The resulting gate has an additional interesting property, namely it is left controlled bistochastic in Hadamard rotated basis. 
Namely, denoting Hadamard matrix by
\be
H = 
\frac{1}{\sqrt{2}}\begin{pmatrix}
    1 & -1 \\
    1 & 1
  \end{pmatrix},
\ee
the gate fulfills 
\be
(H\otimes H)\cdot  U\cdot (H\otimes H)= \text{SWAP}\;   U \;\text{SWAP}, 
\ee
Graphically, the conditions are
\be
 \begin{tikzpicture}[baseline={([yshift=0.4ex]current bounding box.center)},scale=0.65]
    \prop{0}{0}{colU}
    \fMEZero{-0.5}{-0.5}
    %\node at (-0.5,-1) {$0$};
  \end{tikzpicture} = 
   \begin{tikzpicture}[baseline={([yshift=0.4ex]current bounding box.center)},scale=0.65]
     \gridLine{0}{-0.5}{0}{0.5}
     \gridLine{-0.5}{0}{-0.5}{0.5}
      %\node at (-0.5,-.5) {$0$};
         \fMEZero{-0.5}{0}
  \end{tikzpicture}\,, 
  \qquad 
   \begin{tikzpicture}[baseline={([yshift=0.0ex]current bounding box.center)},scale=0.65]
    \prop{0}{0}{colU}
    \fMEZero{-0.5}{0.5}
    %\node at (-0.5,1) {$0$};
  \end{tikzpicture} = 
   \begin{tikzpicture}[baseline={([yshift=0.0ex]current bounding box.center)},scale=0.65]
     \gridLine{0}{-0.5}{0}{0.5}
     \gridLine{-0.5}{0}{-0.5}{-0.5}
      \fMEZero{-0.5}{0}   
      %\node at (-0.5,.5) {$0$};
  \end{tikzpicture}\,, 
  \label{eq:localrelations2}
\ee
which are the mirror image of Eqs.~\eqref{eq:CS} and \eqref{eq:BCS} with the difference that they involve the state $\ket{0}$ rather than $\ket{-}$.

Now we show how the quantum correlation functions reduce to those in Eq.~\eqref{eq:dynamicalcorrelationave}.
We consider dynamical correlation functions of local operators (e.g., Pauli matrices) on the infinite temperature state
\be
  C_{\mu\nu}(x,t)=\frac{1}{\tr \Id}
  \tr[\sigma^{(\mu)}(x,t)\sigma^{(\nu)}(0,0)],
\ee
where $\sigma^{(\mu)}(x,t)=\mathcal{U}(t)^\dagger \sigma^{(\mu)} \mathcal{U}(t)
$ is the Heisenberg-evolved operator. Diagrammatically, this quantity is represented by a large tensor network formed by two layers of the circuit, one with $U$ and one with $U^*$, contracted at the top and bottom with the operators $\mcirc$.

As discussed before,  averaging over the random controlled gates forces the bra and ket spaces to match, thus projecting to a sub-space spanned by $\ket{00},\ket{11}$. This process effectively turns the operator evolution onto a simpler, classical stochastic process. The Pauli operators $\sigma^{(x,y)}$ are projected to zero, while correlations between $\sigma^{(z)}$ can be non-zero. The final averaged correlation function simplifies to Eq.~\eqref{eq:dynamicalcorrelationave} and, therefore, is described by Thms.~\ref{thm1} and \ref{thm2}. 

One can generalize this approach by averaging $w$ in terms of a more restricted measure. This produces a $u_1$ with a richer structure. For example, in App.~\ref{sec:GenTiltedEast} we show how a particular choice of average leads to the  tilted stochastic East model~\cite{defazio2024exact} with a parameter $p$ and a positive tilting parameter $s\geq 0$
\be
u_1=\begin{pmatrix}
1-p&p \ e^{-s}\\p \ e^{-s}&1-p 
\end{pmatrix}.
\ee
The fact that $s\geq 0$ implies that with our averaging procedure we can only probe the inactive phase of the tilted stochastic East model.

\subsection{Controlled bistochastic gates}
We can also consider directly classical stochastic evolution with
\be
U = \sum_{i=0}^{q-1} \ketbra{i}\otimes u_i, 
%\ketbra{0}\otimes \Id + \ketbra{1} \otimes u,
\ee
and where $u_i$ are bistochastic. Such gates satisfy Eqs.~\eqref{eq:CS} and ~\eqref{eq:BCS}.

An important subset of the above class are controlled deterministic cellular automata. In the case of bits $q=2$ the only possibilities are $u_i=\Id,X$, where $X$ is the bit flip, leading to identity and CNOT gates. 
For $q=3$, these controlled permutation matrices are subsets of deterministic automata studied in Ref.~\cite{sharipov2025ergodicbehaviorsreversible3state}. 

%---------------------------------------

\section{Autocorrelation functions in (bi)stochastic controlled circuits}
\label{sec:correlations}

We have seen how the conditions in Eqs.~\eqref{eq:CS} and~\eqref{eq:BCS}, imply that correlations vanish almost everywhere. Nevertheless, they do not inform about the autocorrelation ($x=0$). Here we argue that the latter is generically hard to compute and show numerically that it typically decays exponentially in time but may show different dynamical behaviors.

The diagram from Eq.~\eqref{eq:dynamicalcorrelationave} is simplified using Eq.~\eqref{eq:CS} on the right side from the operator, leading to a triangular diagram of the form 
\be 
C(0,t) =  \begin{tikzpicture}[baseline={([yshift=-0.6ex]current bounding box.center)},scale=0.5]
      \prop{0}{0}{colU}
      \prop{-1}{1}{colU}
      \prop{-2}{2}{colU}
      \prop{0}{2}{colU}
      \prop{-3}{3}{colU}
      \prop{-1}{3}{colU}
      \prop{-2}{4}{colU}
      \prop{0}{4}{colU}
       \prop{-1}{5}{colU}
       \prop{0}{6}{colU}
      \foreach \x in {0,...,3}{
      %\fME{\x+.5}{\x-0.5}
       \MEld{\x+.5-4}{\x-0.5+4}
      \MErd{\x-.5-3}{-\x-0.5+3}
      }
      \foreach \x in {0,2,...,4}{
      \bendR{0.5}{\x+0.5}{\x+1.92}
      }
      \GS{0.5}{-0.5}
       \GS{0.5}{-0.5+7},
\end{tikzpicture}=\begin{tikzpicture}[baseline={([yshift=-0.6ex]current bounding box.center)},scale=0.5]
      \prop{0}{0}{colU}
      \prop{-1}{1}{colU}
      \prop{-2}{2}{colU}
      \prop{0}{2}{colU}
      \prop{-3}{3}{colU}
      \prop{-1}{3}{colU}
      \prop{-2}{4}{colU}
      \prop{0}{4}{colU}
       \prop{-1}{5}{colU}
       \prop{0}{6}{colU}
      \foreach \x in {0,...,3}{
      %\MEh{\x+.5}{\x-0.5}
       \MEld{\x+.5-4}{\x-0.5+4}
      \MErd{\x-.5-3}{-\x-0.5+3}
      }
       \foreach \x in {0,...,7}{
      \GS{0.5}{-0.5+\x}},
\end{tikzpicture},
\label{eq:dynamicalcorrelationavesimp}
\ee
where in the second step we restricted ourselves to $q=2$ and inserted a resolution of the identity $\Id= \ketbra{-}+\ketbra{\mcircf}$ 
in each of the right loops. Then we used Eq.~\eqref{eq:CS} and removed the vanishing contributions from $\ketbra{-}$ terms. Note that this triangular diagram has $t$ gates on each one of the equal edges ($t=4$ in the diagram shown in Eq.~\eqref{eq:dynamicalcorrelationavesimp}).

In general the diagram in Eq.~\eqref{eq:dynamicalcorrelationavesimp} is hard to simplify. We attempted to write a recursion relation using triangle diagrams where on right edge there can be either pairs of $-$ or $\mcircf$. For gates emerging from random controlled gates, however, all exponentially many such diagrams contribute. We also did not find particular simplifications emerging in the classical deterministic case.

\begin{figure}[h!]
    \centering
    \includegraphics[width=0.95\linewidth]{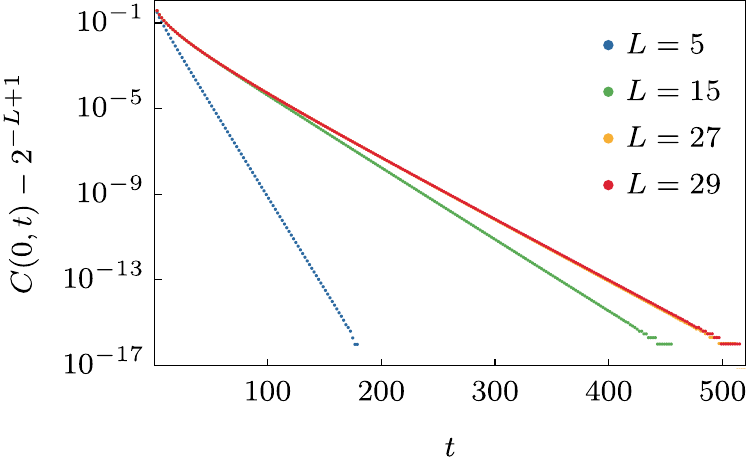}
    \caption{Exponential decay of the autocorrelation function for different system sizes and averaged controlled gate. The difference between $L=27$ and $L=29$ is almost indistinguishable. We deduced finite size contribution arising from a scar state $|00 \ldots  0 \protect\mathbin{\protect\scalerel*{\newmoon}{G}}
    \rangle$.}
    \label{fig:1}
\end{figure}

The behavior of the autocorrelation is system-dependent. While it decays exponentially in generic cases (see Fig.~\ref{fig:1}), it can decay to (system-size independent) constant values in some systems. Note that in finite systems of $L$ sites with open boundary conditions and $u_0=\Id$, a state $\ket{00 \ldots  0 \mcircf}$ is a scar eigenstate. Since $\bra{--\ldots - \mcircf}\ket{00 \ldots  0 \mcircf} = q^{-L+1}$, the autocorrelations at long time decay to this value. %\bruno{Isn't this only true for the random controlled gates?}

For deterministic automata with $q=2$ the non-trivial example is the CNOT gate, which corresponds to deterministic East model~\cite{klobas2023exact}. The latter is a member of the second level of hierarchical generalizations of dual unitarity~\cite{Yu_2024}, which means that its auto-correlation functions can be computed exactly giving $C(0,t)=0$ for any $t>0$. Instead, for deterministic automata with $q=3$ we have no available analytical result and we resorted to a numerical survey. The vast majority of systems display exponentially decaying autocorrelation functions. Exceptions include examples where the function is identically zero for times $t \geq 1$, and cases where the autocorrelation stabilizes at a non-zero constant. The latter examples are, however, ``trivial" because their dynamics do not update certain local states. Therefore, they feature non-decaying components in the autocorrelation.

\section{Discussion}
In this paper, we demonstrated how correlation functions of controlled stochastic models vanish away from the equal-space line. Nevertheless, the non-vanishing autocorrelation function remains generically hard to compute and displays a variety of complex behaviors. This phenomenology is a direct consequence of what we termed controlled (bi-)stochastic conditions, arising naturally both in the classical context and in the quantum one, as one-replica averages over random controlled unitaries. 

The controlled (bi-)stochastic conditions we identified are also relevant for computing local expectation values after a quantum quench, but only for special initial states featuring an ``inactive'' region of $\ket{-}$ extending to the right boundary of the system. This can be achieved, for instance, considering special bipartitioning protocols or local quenches. 

A natural extension of our work is to generalize the controlled (bi-)stochastic conditions to the multi-replica case. Such an extension would enable the analytical calculation of important quantities in many-body dynamics, including entanglement entropy and out-of-time-ordered correlators (OTOCs).

\section{Acknowledgments}
We thank Sirui Lu for help and useful discussions. This project has received funding from the European Union's Horizon Europe research and innovation programme under the Marie Skłodowska-Curie grant agreement No. 101200860 (FAQ-QuantuM2D) (P.\ K.) and by the Royal Society through the University Research Fellowship No. 201101 (B.\ B.).
 T.\ P.\ acknowledges support by European Research Council (ERC) through Advanced grant QUEST (Grant Agreement No. 101096208), and Slovenian Research and Innovation agency (ARIS) through the Program P1-0402 and Grant N1-0368.

\begin{comment}
 The correlation is written as 
\be 
C_{\mu\nu}(x,t) = \begin{tikzpicture}[baseline={([yshift=-0.6ex]current bounding box.center)},scale=0.5]
      \foreach \y in {0,...,3}{
      \foreach \x in {0,...,6}{
      \prop{\x-\y}{\x+\y}{FcolU}}
      }
      \foreach \x in {0,...,6}{
      \fME{\x+.5}{\x-0.5}
       \fME{\x+.5-4}{\x-0.5+4}
      }
      \foreach \y in {0,...,3}{
      \fME{\y-.5-3}{-\y-0.5+3}
      \fME{\y-.5+4}{-\y-0.5+10}
      }
      \GS{-0.5}{-0.5}
       \GS{-0.5+4}{-0.5+10},
       \node at (-1,-0.5) {$\mu$};
       \node at (4,10) {$\nu$};
\end{tikzpicture},
\label{eq:dynamicalcorrelation}
\ee	
\end{comment}

\bibliography{bibliography}

%apsrev4-2.bst 2019-01-14 (MD) hand-edited version of apsrev4-1.bst
%Control: key (0)
%Control: author (8) initials jnrlst
%Control: editor formatted (1) identically to author
%Control: production of article title (0) allowed
%Control: page (0) single
%Control: year (1) truncated
%Control: production of eprint (0) enabled
\begin{thebibliography}{45}%
\makeatletter
\providecommand \@ifxundefined [1]{%
 \@ifx{#1\undefined}
}%
\providecommand \@ifnum [1]{%
 \ifnum #1\expandafter \@firstoftwo
 \else \expandafter \@secondoftwo
 \fi
}%
\providecommand \@ifx [1]{%
 \ifx #1\expandafter \@firstoftwo
 \else \expandafter \@secondoftwo
 \fi
}%
\providecommand \natexlab [1]{#1}%
\providecommand \enquote  [1]{``#1''}%
\providecommand \bibnamefont  [1]{#1}%
\providecommand \bibfnamefont [1]{#1}%
\providecommand \citenamefont [1]{#1}%
\providecommand \href@noop [0]{\@secondoftwo}%
\providecommand \href [0]{\begingroup \@sanitize@url \@href}%
\providecommand \@href[1]{\@@startlink{#1}\@@href}%
\providecommand \@@href[1]{\endgroup#1\@@endlink}%
\providecommand \@sanitize@url [0]{\catcode `\\12\catcode `\$12\catcode `\&12\catcode `\#12\catcode `\^12\catcode `\_12\catcode `\%12\relax}%
\providecommand \@@startlink[1]{}%
\providecommand \@@endlink[0]{}%
\providecommand \url  [0]{\begingroup\@sanitize@url \@url }%
\providecommand \@url [1]{\endgroup\@href {#1}{\urlprefix }}%
\providecommand \urlprefix  [0]{URL }%
\providecommand \Eprint [0]{\href }%
\providecommand \doibase [0]{https://doi.org/}%
\providecommand \selectlanguage [0]{\@gobble}%
\providecommand \bibinfo  [0]{\@secondoftwo}%
\providecommand \bibfield  [0]{\@secondoftwo}%
\providecommand \translation [1]{[#1]}%
\providecommand \BibitemOpen [0]{}%
\providecommand \bibitemStop [0]{}%
\providecommand \bibitemNoStop [0]{.\EOS\space}%
\providecommand \EOS [0]{\spacefactor3000\relax}%
\providecommand \BibitemShut  [1]{\csname bibitem#1\endcsname}%
\let\auto@bib@innerbib\@empty
%</preamble>
\bibitem [{\citenamefont {Polkovnikov}\ \emph {et~al.}(2011)\citenamefont {Polkovnikov}, \citenamefont {Sengupta}, \citenamefont {Silva},\ and\ \citenamefont {Vengalattore}}]{Polkovnikov2011}%
  \BibitemOpen
  \bibfield  {author} {\bibinfo {author} {\bibfnamefont {A.}~\bibnamefont {Polkovnikov}}, \bibinfo {author} {\bibfnamefont {K.}~\bibnamefont {Sengupta}}, \bibinfo {author} {\bibfnamefont {A.}~\bibnamefont {Silva}},\ and\ \bibinfo {author} {\bibfnamefont {M.}~\bibnamefont {Vengalattore}},\ }\bibfield  {title} {\bibinfo {title} {Colloquium: Nonequilibrium dynamics of closed interacting quantum systems},\ }\href {https://doi.org/10.1103/RevModPhys.83.863} {\bibfield  {journal} {\bibinfo  {journal} {Rev. Mod. Phys.}\ }\textbf {\bibinfo {volume} {83}},\ \bibinfo {pages} {863} (\bibinfo {year} {2011})}\BibitemShut {NoStop}%
\bibitem [{\citenamefont {Eisert}\ \emph {et~al.}(2015)\citenamefont {Eisert}, \citenamefont {Friesdorf},\ and\ \citenamefont {Gogolin}}]{eisert2015quantum}%
  \BibitemOpen
  \bibfield  {author} {\bibinfo {author} {\bibfnamefont {J.}~\bibnamefont {Eisert}}, \bibinfo {author} {\bibfnamefont {M.}~\bibnamefont {Friesdorf}},\ and\ \bibinfo {author} {\bibfnamefont {C.}~\bibnamefont {Gogolin}},\ }\bibfield  {title} {\bibinfo {title} {Quantum many-body systems out of equilibrium},\ }\href@noop {} {\bibfield  {journal} {\bibinfo  {journal} {Nature Physics}\ }\textbf {\bibinfo {volume} {11}},\ \bibinfo {pages} {124} (\bibinfo {year} {2015})}\BibitemShut {NoStop}%
\bibitem [{\citenamefont {Calabrese}\ \emph {et~al.}(2016)\citenamefont {Calabrese}, \citenamefont {Essler},\ and\ \citenamefont {Mussardo}}]{calabrese2016introduction}%
  \BibitemOpen
  \bibfield  {author} {\bibinfo {author} {\bibfnamefont {P.}~\bibnamefont {Calabrese}}, \bibinfo {author} {\bibfnamefont {F.~H.}\ \bibnamefont {Essler}},\ and\ \bibinfo {author} {\bibfnamefont {G.}~\bibnamefont {Mussardo}},\ }\bibfield  {title} {\bibinfo {title} {Introduction to ‘quantum integrability in out of equilibrium systems’},\ }\href {https://doi.org/10.1088/1742-5468/2016/06/064001} {\bibfield  {journal} {\bibinfo  {journal} {J. Stat. Mech.}\ }\textbf {\bibinfo {volume} {2016}},\ \bibinfo {pages} {064001} (\bibinfo {year} {2016})}\BibitemShut {NoStop}%
\bibitem [{\citenamefont {Essler}\ and\ \citenamefont {Fagotti}(2016)}]{essler2016quench}%
  \BibitemOpen
  \bibfield  {author} {\bibinfo {author} {\bibfnamefont {F.~H.~L.}\ \bibnamefont {Essler}}\ and\ \bibinfo {author} {\bibfnamefont {M.}~\bibnamefont {Fagotti}},\ }\bibfield  {title} {\bibinfo {title} {Quench dynamics and relaxation in isolated integrable quantum spin chains},\ }\href {https://doi.org/10.1088/1742-5468/2016/06/064002} {\bibfield  {journal} {\bibinfo  {journal} {J. Stat. Mech.}\ }\textbf {\bibinfo {volume} {2016}},\ \bibinfo {pages} {064002} (\bibinfo {year} {2016})}\BibitemShut {NoStop}%
\bibitem [{\citenamefont {Gogolin}\ and\ \citenamefont {Eisert}(2016)}]{gogolin2016equilibration}%
  \BibitemOpen
  \bibfield  {author} {\bibinfo {author} {\bibfnamefont {C.}~\bibnamefont {Gogolin}}\ and\ \bibinfo {author} {\bibfnamefont {J.}~\bibnamefont {Eisert}},\ }\bibfield  {title} {\bibinfo {title} {Equilibration, thermalisation, and the emergence of statistical mechanics in closed quantum systems},\ }\href {https://doi.org/10.1088/0034-4885/79/5/056001} {\bibfield  {journal} {\bibinfo  {journal} {Reports on Progress in Physics}\ }\textbf {\bibinfo {volume} {79}},\ \bibinfo {pages} {056001} (\bibinfo {year} {2016})}\BibitemShut {NoStop}%
\bibitem [{\citenamefont {Abanin}\ \emph {et~al.}(2019)\citenamefont {Abanin}, \citenamefont {Altman}, \citenamefont {Bloch},\ and\ \citenamefont {Serbyn}}]{abanin2019many}%
  \BibitemOpen
  \bibfield  {author} {\bibinfo {author} {\bibfnamefont {D.~A.}\ \bibnamefont {Abanin}}, \bibinfo {author} {\bibfnamefont {E.}~\bibnamefont {Altman}}, \bibinfo {author} {\bibfnamefont {I.}~\bibnamefont {Bloch}},\ and\ \bibinfo {author} {\bibfnamefont {M.}~\bibnamefont {Serbyn}},\ }\bibfield  {title} {\bibinfo {title} {Colloquium: Many-body localization, thermalization, and entanglement},\ }\href {https://doi.org/10.1103/RevModPhys.91.021001} {\bibfield  {journal} {\bibinfo  {journal} {Rev. Mod. Phys.}\ }\textbf {\bibinfo {volume} {91}},\ \bibinfo {pages} {021001} (\bibinfo {year} {2019})}\BibitemShut {NoStop}%
\bibitem [{\citenamefont {Bertini}\ \emph {et~al.}(2021{\natexlab{a}})\citenamefont {Bertini}, \citenamefont {Heidrich-Meisner}, \citenamefont {Karrasch}, \citenamefont {Prosen}, \citenamefont {Steinigeweg},\ and\ \citenamefont {\ifmmode \check{Z}\else \v{Z}\fi{}nidari\ifmmode~\check{c}\else \v{c}\fi{}}}]{bertini2021finite}%
  \BibitemOpen
  \bibfield  {author} {\bibinfo {author} {\bibfnamefont {B.}~\bibnamefont {Bertini}}, \bibinfo {author} {\bibfnamefont {F.}~\bibnamefont {Heidrich-Meisner}}, \bibinfo {author} {\bibfnamefont {C.}~\bibnamefont {Karrasch}}, \bibinfo {author} {\bibfnamefont {T.}~\bibnamefont {Prosen}}, \bibinfo {author} {\bibfnamefont {R.}~\bibnamefont {Steinigeweg}},\ and\ \bibinfo {author} {\bibfnamefont {M.}~\bibnamefont {\ifmmode \check{Z}\else \v{Z}\fi{}nidari\ifmmode~\check{c}\else \v{c}\fi{}}},\ }\bibfield  {title} {\bibinfo {title} {Finite-temperature transport in one-dimensional quantum lattice models},\ }\href {https://doi.org/10.1103/RevModPhys.93.025003} {\bibfield  {journal} {\bibinfo  {journal} {Rev. Mod. Phys.}\ }\textbf {\bibinfo {volume} {93}},\ \bibinfo {pages} {025003} (\bibinfo {year} {2021}{\natexlab{a}})}\BibitemShut {NoStop}%
\bibitem [{\citenamefont {Bastianello}\ \emph {et~al.}(2022)\citenamefont {Bastianello}, \citenamefont {Bertini}, \citenamefont {Doyon},\ and\ \citenamefont {Vasseur}}]{bastianello2022introduction}%
  \BibitemOpen
  \bibfield  {author} {\bibinfo {author} {\bibfnamefont {A.}~\bibnamefont {Bastianello}}, \bibinfo {author} {\bibfnamefont {B.}~\bibnamefont {Bertini}}, \bibinfo {author} {\bibfnamefont {B.}~\bibnamefont {Doyon}},\ and\ \bibinfo {author} {\bibfnamefont {R.}~\bibnamefont {Vasseur}},\ }\bibfield  {title} {\bibinfo {title} {Introduction to the special issue on emergent hydrodynamics in integrable many-body systems},\ }\href@noop {} {\bibfield  {journal} {\bibinfo  {journal} {Journal of Statistical Mechanics: Theory and Experiment}\ }\textbf {\bibinfo {volume} {2022}},\ \bibinfo {pages} {014001} (\bibinfo {year} {2022})}\BibitemShut {NoStop}%
\bibitem [{\citenamefont {Fisher}\ \emph {et~al.}(2023)\citenamefont {Fisher}, \citenamefont {Khemani}, \citenamefont {Nahum},\ and\ \citenamefont {Vijay}}]{fisher2022random}%
  \BibitemOpen
  \bibfield  {author} {\bibinfo {author} {\bibfnamefont {M.~P.}\ \bibnamefont {Fisher}}, \bibinfo {author} {\bibfnamefont {V.}~\bibnamefont {Khemani}}, \bibinfo {author} {\bibfnamefont {A.}~\bibnamefont {Nahum}},\ and\ \bibinfo {author} {\bibfnamefont {S.}~\bibnamefont {Vijay}},\ }\bibfield  {title} {\bibinfo {title} {Random quantum circuits},\ }\href {https://doi.org/10.1146/annurev-conmatphys-031720-030658} {\bibfield  {journal} {\bibinfo  {journal} {Annu. Rev. Condens. Matter Phys.}\ }\textbf {\bibinfo {volume} {14}},\ \bibinfo {pages} {335} (\bibinfo {year} {2023})}\BibitemShut {NoStop}%
\bibitem [{\citenamefont {Potter}\ and\ \citenamefont {Vasseur}(2022)}]{potter2022entanglement}%
  \BibitemOpen
  \bibfield  {author} {\bibinfo {author} {\bibfnamefont {A.~C.}\ \bibnamefont {Potter}}\ and\ \bibinfo {author} {\bibfnamefont {R.}~\bibnamefont {Vasseur}},\ }\bibinfo {title} {Entanglement dynamics in hybrid quantum circuits},\ in\ \href {https://doi.org/10.1007/978-3-031-03998-0_9} {\emph {\bibinfo {booktitle} {Entanglement in Spin Chains: From Theory to Quantum Technology Applications}}},\ \bibinfo {editor} {edited by\ \bibinfo {editor} {\bibfnamefont {A.}~\bibnamefont {Bayat}}, \bibinfo {editor} {\bibfnamefont {S.}~\bibnamefont {Bose}},\ and\ \bibinfo {editor} {\bibfnamefont {H.}~\bibnamefont {Johannesson}}}\ (\bibinfo  {publisher} {Springer International Publishing},\ \bibinfo {address} {Cham},\ \bibinfo {year} {2022})\ pp.\ \bibinfo {pages} {211--249}\BibitemShut {NoStop}%
\bibitem [{\citenamefont {Bertini}\ \emph {et~al.}(2025)\citenamefont {Bertini}, \citenamefont {Claeys},\ and\ \citenamefont {Prosen}}]{bertini2025exactly}%
  \BibitemOpen
  \bibfield  {author} {\bibinfo {author} {\bibfnamefont {B.}~\bibnamefont {Bertini}}, \bibinfo {author} {\bibfnamefont {P.~W.}\ \bibnamefont {Claeys}},\ and\ \bibinfo {author} {\bibfnamefont {T.}~\bibnamefont {Prosen}},\ }\href {https://arxiv.org/abs/2505.11489} {\bibinfo {title} {Exactly solvable many-body dynamics from space-time duality}} (\bibinfo {year} {2025}),\ \Eprint {https://arxiv.org/abs/2505.11489} {arXiv:2505.11489 [cond-mat.stat-mech]} \BibitemShut {NoStop}%
\bibitem [{\citenamefont {Chan}\ \emph {et~al.}(2018{\natexlab{a}})\citenamefont {Chan}, \citenamefont {De~Luca},\ and\ \citenamefont {Chalker}}]{chan2018solution}%
  \BibitemOpen
  \bibfield  {author} {\bibinfo {author} {\bibfnamefont {A.}~\bibnamefont {Chan}}, \bibinfo {author} {\bibfnamefont {A.}~\bibnamefont {De~Luca}},\ and\ \bibinfo {author} {\bibfnamefont {J.~T.}\ \bibnamefont {Chalker}},\ }\bibfield  {title} {\bibinfo {title} {Solution of a minimal model for many-body quantum chaos},\ }\href {https://doi.org/10.1103/PhysRevX.8.041019} {\bibfield  {journal} {\bibinfo  {journal} {Phys. Rev. X}\ }\textbf {\bibinfo {volume} {8}},\ \bibinfo {pages} {041019} (\bibinfo {year} {2018}{\natexlab{a}})}\BibitemShut {NoStop}%
\bibitem [{\citenamefont {Chan}\ \emph {et~al.}(2018{\natexlab{b}})\citenamefont {Chan}, \citenamefont {De~Luca},\ and\ \citenamefont {Chalker}}]{chan2018spectral}%
  \BibitemOpen
  \bibfield  {author} {\bibinfo {author} {\bibfnamefont {A.}~\bibnamefont {Chan}}, \bibinfo {author} {\bibfnamefont {A.}~\bibnamefont {De~Luca}},\ and\ \bibinfo {author} {\bibfnamefont {J.~T.}\ \bibnamefont {Chalker}},\ }\bibfield  {title} {\bibinfo {title} {Spectral statistics in spatially extended chaotic quantum many-body systems},\ }\href {https://doi.org/10.1103/PhysRevLett.121.060601} {\bibfield  {journal} {\bibinfo  {journal} {Phys. Rev. Lett.}\ }\textbf {\bibinfo {volume} {121}},\ \bibinfo {pages} {060601} (\bibinfo {year} {2018}{\natexlab{b}})}\BibitemShut {NoStop}%
\bibitem [{\citenamefont {Bertini}\ \emph {et~al.}(2018)\citenamefont {Bertini}, \citenamefont {Kos},\ and\ \citenamefont {Prosen}}]{bertini2018exact}%
  \BibitemOpen
  \bibfield  {author} {\bibinfo {author} {\bibfnamefont {B.}~\bibnamefont {Bertini}}, \bibinfo {author} {\bibfnamefont {P.}~\bibnamefont {Kos}},\ and\ \bibinfo {author} {\bibfnamefont {T.}~\bibnamefont {Prosen}},\ }\bibfield  {title} {\bibinfo {title} {Exact spectral form factor in a minimal model of many-body quantum chaos},\ }\href {https://doi.org/10.1103/PhysRevLett.121.264101} {\bibfield  {journal} {\bibinfo  {journal} {Phys. Rev. Lett.}\ }\textbf {\bibinfo {volume} {121}},\ \bibinfo {pages} {264101} (\bibinfo {year} {2018})}\BibitemShut {NoStop}%
\bibitem [{\citenamefont {Friedman}\ \emph {et~al.}(2019)\citenamefont {Friedman}, \citenamefont {Chan}, \citenamefont {De~Luca},\ and\ \citenamefont {Chalker}}]{friedman2019spectral}%
  \BibitemOpen
  \bibfield  {author} {\bibinfo {author} {\bibfnamefont {A.~J.}\ \bibnamefont {Friedman}}, \bibinfo {author} {\bibfnamefont {A.}~\bibnamefont {Chan}}, \bibinfo {author} {\bibfnamefont {A.}~\bibnamefont {De~Luca}},\ and\ \bibinfo {author} {\bibfnamefont {J.~T.}\ \bibnamefont {Chalker}},\ }\bibfield  {title} {\bibinfo {title} {Spectral statistics and many-body quantum chaos with conserved charge},\ }\href {https://doi.org/10.1103/PhysRevLett.123.210603} {\bibfield  {journal} {\bibinfo  {journal} {Phys. Rev. Lett.}\ }\textbf {\bibinfo {volume} {123}},\ \bibinfo {pages} {210603} (\bibinfo {year} {2019})}\BibitemShut {NoStop}%
\bibitem [{\citenamefont {Bertini}\ \emph {et~al.}(2021{\natexlab{b}})\citenamefont {Bertini}, \citenamefont {Kos},\ and\ \citenamefont {Prosen}}]{bertini2021random}%
  \BibitemOpen
  \bibfield  {author} {\bibinfo {author} {\bibfnamefont {B.}~\bibnamefont {Bertini}}, \bibinfo {author} {\bibfnamefont {P.}~\bibnamefont {Kos}},\ and\ \bibinfo {author} {\bibfnamefont {T.}~\bibnamefont {Prosen}},\ }\bibfield  {title} {\bibinfo {title} {Random matrix spectral form factor of dual-unitary quantum circuits},\ }\href {https://doi.org/10.1007/s00220-021-04139-2} {\bibfield  {journal} {\bibinfo  {journal} {Commun. Math. Phys.}\ ,\ \bibinfo {pages} {1}} (\bibinfo {year} {2021}{\natexlab{b}})}\BibitemShut {NoStop}%
\bibitem [{\citenamefont {Fritzsch}\ and\ \citenamefont {Prosen}(2021)}]{fritzsch2021eigenstate}%
  \BibitemOpen
  \bibfield  {author} {\bibinfo {author} {\bibfnamefont {F.}~\bibnamefont {Fritzsch}}\ and\ \bibinfo {author} {\bibfnamefont {T.}~\bibnamefont {Prosen}},\ }\bibfield  {title} {\bibinfo {title} {Eigenstate thermalization in dual-unitary quantum circuits: {Asymptotics} of spectral functions},\ }\href {https://doi.org/10.1103/PhysRevE.103.062133} {\bibfield  {journal} {\bibinfo  {journal} {Phys. Rev. E}\ }\textbf {\bibinfo {volume} {103}},\ \bibinfo {pages} {062133} (\bibinfo {year} {2021})}\BibitemShut {NoStop}%
\bibitem [{\citenamefont {Fritzsch}\ \emph {et~al.}(2025)\citenamefont {Fritzsch}, \citenamefont {Kieler},\ and\ \citenamefont {B{\"{a}}cker}}]{fritzsch2025eigenstate}%
  \BibitemOpen
  \bibfield  {author} {\bibinfo {author} {\bibfnamefont {F.}~\bibnamefont {Fritzsch}}, \bibinfo {author} {\bibfnamefont {M.~F.~I.}\ \bibnamefont {Kieler}},\ and\ \bibinfo {author} {\bibfnamefont {A.}~\bibnamefont {B{\"{a}}cker}},\ }\bibfield  {title} {\bibinfo {title} {Eigenstate {Correlations} in {Dual}-{Unitary} {Quantum} {Circuits}: {Partial} {Spectral} {Form} {Factor}},\ }\href {https://doi.org/10.22331/q-2025-04-17-1709} {\bibfield  {journal} {\bibinfo  {journal} {Quantum}\ }\textbf {\bibinfo {volume} {9}},\ \bibinfo {pages} {1709} (\bibinfo {year} {2025})}\BibitemShut {NoStop}%
\bibitem [{\citenamefont {Piroli}\ \emph {et~al.}(2020)\citenamefont {Piroli}, \citenamefont {Bertini}, \citenamefont {Cirac},\ and\ \citenamefont {Prosen}}]{piroli2020exact}%
  \BibitemOpen
  \bibfield  {author} {\bibinfo {author} {\bibfnamefont {L.}~\bibnamefont {Piroli}}, \bibinfo {author} {\bibfnamefont {B.}~\bibnamefont {Bertini}}, \bibinfo {author} {\bibfnamefont {J.~I.}\ \bibnamefont {Cirac}},\ and\ \bibinfo {author} {\bibfnamefont {T.}~\bibnamefont {Prosen}},\ }\bibfield  {title} {\bibinfo {title} {Exact dynamics in dual-unitary quantum circuits},\ }\href {https://doi.org/10.1103/PhysRevB.101.094304} {\bibfield  {journal} {\bibinfo  {journal} {Phys. Rev. B}\ }\textbf {\bibinfo {volume} {101}},\ \bibinfo {pages} {094304} (\bibinfo {year} {2020})}\BibitemShut {NoStop}%
\bibitem [{\citenamefont {Ho}\ and\ \citenamefont {Choi}(2022)}]{ho2022exact}%
  \BibitemOpen
  \bibfield  {author} {\bibinfo {author} {\bibfnamefont {W.~W.}\ \bibnamefont {Ho}}\ and\ \bibinfo {author} {\bibfnamefont {S.}~\bibnamefont {Choi}},\ }\bibfield  {title} {\bibinfo {title} {Exact emergent quantum state designs from quantum chaotic dynamics},\ }\href {https://doi.org/10.1103/PhysRevLett.128.060601} {\bibfield  {journal} {\bibinfo  {journal} {Phys. Rev. Lett.}\ }\textbf {\bibinfo {volume} {128}},\ \bibinfo {pages} {060601} (\bibinfo {year} {2022})}\BibitemShut {NoStop}%
\bibitem [{\citenamefont {Ippoliti}\ and\ \citenamefont {Ho}(2023)}]{ippoliti2023dynamical}%
  \BibitemOpen
  \bibfield  {author} {\bibinfo {author} {\bibfnamefont {M.}~\bibnamefont {Ippoliti}}\ and\ \bibinfo {author} {\bibfnamefont {W.~W.}\ \bibnamefont {Ho}},\ }\bibfield  {title} {\bibinfo {title} {Dynamical purification and the emergence of quantum state designs from the projected ensemble},\ }\href {https://doi.org/10.1103/PRXQuantum.4.030322} {\bibfield  {journal} {\bibinfo  {journal} {PRX Quantum}\ }\textbf {\bibinfo {volume} {4}},\ \bibinfo {pages} {030322} (\bibinfo {year} {2023})}\BibitemShut {NoStop}%
\bibitem [{\citenamefont {Nahum}\ \emph {et~al.}(2017)\citenamefont {Nahum}, \citenamefont {Ruhman}, \citenamefont {Vijay},\ and\ \citenamefont {Haah}}]{nahum2017quantum}%
  \BibitemOpen
  \bibfield  {author} {\bibinfo {author} {\bibfnamefont {A.}~\bibnamefont {Nahum}}, \bibinfo {author} {\bibfnamefont {J.}~\bibnamefont {Ruhman}}, \bibinfo {author} {\bibfnamefont {S.}~\bibnamefont {Vijay}},\ and\ \bibinfo {author} {\bibfnamefont {J.}~\bibnamefont {Haah}},\ }\bibfield  {title} {\bibinfo {title} {Quantum entanglement growth under random unitary dynamics},\ }\href {https://doi.org/10.1103/PhysRevX.7.031016} {\bibfield  {journal} {\bibinfo  {journal} {Phys. Rev. X}\ }\textbf {\bibinfo {volume} {7}},\ \bibinfo {pages} {031016} (\bibinfo {year} {2017})}\BibitemShut {NoStop}%
\bibitem [{\citenamefont {Bertini}\ \emph {et~al.}(2019{\natexlab{a}})\citenamefont {Bertini}, \citenamefont {Kos},\ and\ \citenamefont {Prosen}}]{bertini2019entanglement}%
  \BibitemOpen
  \bibfield  {author} {\bibinfo {author} {\bibfnamefont {B.}~\bibnamefont {Bertini}}, \bibinfo {author} {\bibfnamefont {P.}~\bibnamefont {Kos}},\ and\ \bibinfo {author} {\bibfnamefont {T.}~\bibnamefont {Prosen}},\ }\bibfield  {title} {\bibinfo {title} {Entanglement spreading in a minimal model of maximal many-body quantum chaos},\ }\href {https://doi.org/10.1103/PhysRevX.9.021033} {\bibfield  {journal} {\bibinfo  {journal} {Phys. Rev. X}\ }\textbf {\bibinfo {volume} {9}},\ \bibinfo {pages} {021033} (\bibinfo {year} {2019}{\natexlab{a}})}\BibitemShut {NoStop}%
\bibitem [{\citenamefont {Gopalakrishnan}\ and\ \citenamefont {Lamacraft}(2019)}]{gopalakrishnan2019unitary}%
  \BibitemOpen
  \bibfield  {author} {\bibinfo {author} {\bibfnamefont {S.}~\bibnamefont {Gopalakrishnan}}\ and\ \bibinfo {author} {\bibfnamefont {A.}~\bibnamefont {Lamacraft}},\ }\bibfield  {title} {\bibinfo {title} {Unitary circuits of finite depth and infinite width from quantum channels},\ }\href {https://doi.org/10.1103/PhysRevB.100.064309} {\bibfield  {journal} {\bibinfo  {journal} {Phys. Rev. B}\ }\textbf {\bibinfo {volume} {100}},\ \bibinfo {pages} {064309} (\bibinfo {year} {2019})}\BibitemShut {NoStop}%
\bibitem [{\citenamefont {Zhou}\ and\ \citenamefont {Harrow}(2022)}]{zhou2022maximal}%
  \BibitemOpen
  \bibfield  {author} {\bibinfo {author} {\bibfnamefont {T.}~\bibnamefont {Zhou}}\ and\ \bibinfo {author} {\bibfnamefont {A.~W.}\ \bibnamefont {Harrow}},\ }\bibfield  {title} {\bibinfo {title} {Maximal entanglement velocity implies dual unitarity},\ }\href {https://doi.org/10.1103/PhysRevB.106.L201104} {\bibfield  {journal} {\bibinfo  {journal} {Phys. Rev. B}\ }\textbf {\bibinfo {volume} {106}},\ \bibinfo {pages} {L201104} (\bibinfo {year} {2022})}\BibitemShut {NoStop}%
\bibitem [{\citenamefont {Bertini}\ \emph {et~al.}(2019{\natexlab{b}})\citenamefont {Bertini}, \citenamefont {Kos},\ and\ \citenamefont {Prosen}}]{bertini2019exact}%
  \BibitemOpen
  \bibfield  {author} {\bibinfo {author} {\bibfnamefont {B.}~\bibnamefont {Bertini}}, \bibinfo {author} {\bibfnamefont {P.}~\bibnamefont {Kos}},\ and\ \bibinfo {author} {\bibfnamefont {T.}~\bibnamefont {Prosen}},\ }\bibfield  {title} {\bibinfo {title} {Exact correlation functions for dual-unitary lattice models in $1+1$ dimensions},\ }\href {https://doi.org/10.1103/PhysRevLett.123.210601} {\bibfield  {journal} {\bibinfo  {journal} {Phys. Rev. Lett.}\ }\textbf {\bibinfo {volume} {123}},\ \bibinfo {pages} {210601} (\bibinfo {year} {2019}{\natexlab{b}})}\BibitemShut {NoStop}%
\bibitem [{\citenamefont {Fredrickson}\ and\ \citenamefont {Andersen}(1984)}]{fredrickson1984kinetic}%
  \BibitemOpen
  \bibfield  {author} {\bibinfo {author} {\bibfnamefont {G.~H.}\ \bibnamefont {Fredrickson}}\ and\ \bibinfo {author} {\bibfnamefont {H.~C.}\ \bibnamefont {Andersen}},\ }\bibfield  {title} {\bibinfo {title} {Kinetic ising model of the glass transition},\ }\href {https://doi.org/10.1103/PhysRevLett.53.1244} {\bibfield  {journal} {\bibinfo  {journal} {Phys. Rev. Lett.}\ }\textbf {\bibinfo {volume} {53}},\ \bibinfo {pages} {1244} (\bibinfo {year} {1984})}\BibitemShut {NoStop}%
\bibitem [{\citenamefont {Ritort}\ and\ \citenamefont {Sollich}(2003)}]{Ritort2003Glassy}%
  \BibitemOpen
  \bibfield  {author} {\bibinfo {author} {\bibfnamefont {F.}~\bibnamefont {Ritort}}\ and\ \bibinfo {author} {\bibfnamefont {P.}~\bibnamefont {Sollich}},\ }\bibfield  {title} {\bibinfo {title} {Glassy dynamics of kinetically constrained models},\ }\href {https://doi.org/10.1080/0001873031000093582} {\bibfield  {journal} {\bibinfo  {journal} {Advances in Physics}\ }\textbf {\bibinfo {volume} {52}},\ \bibinfo {pages} {219–342} (\bibinfo {year} {2003})}\BibitemShut {NoStop}%
\bibitem [{\citenamefont {J{\"a}ckle}\ and\ \citenamefont {Eisinger}(1991)}]{jackle1991hierarchically}%
  \BibitemOpen
  \bibfield  {author} {\bibinfo {author} {\bibfnamefont {J.}~\bibnamefont {J{\"a}ckle}}\ and\ \bibinfo {author} {\bibfnamefont {S.}~\bibnamefont {Eisinger}},\ }\bibfield  {title} {\bibinfo {title} {A hierarchically constrained kinetic ising model},\ }\href@noop {} {\bibfield  {journal} {\bibinfo  {journal} {Zeitschrift f{\"u}r physik B condensed matter}\ }\textbf {\bibinfo {volume} {84}},\ \bibinfo {pages} {115} (\bibinfo {year} {1991})}\BibitemShut {NoStop}%
\bibitem [{\citenamefont {Biroli}\ and\ \citenamefont {Garrahan}(2013)}]{Biroli_2013}%
  \BibitemOpen
  \bibfield  {author} {\bibinfo {author} {\bibfnamefont {G.}~\bibnamefont {Biroli}}\ and\ \bibinfo {author} {\bibfnamefont {J.~P.}\ \bibnamefont {Garrahan}},\ }\bibfield  {title} {\bibinfo {title} {Perspective: The glass transition},\ }\bibfield  {journal} {\bibinfo  {journal} {The Journal of Chemical Physics}\ }\textbf {\bibinfo {volume} {138}},\ \href {https://doi.org/10.1063/1.4795539} {10.1063/1.4795539} (\bibinfo {year} {2013})\BibitemShut {NoStop}%
\bibitem [{\citenamefont {{Turner}}\ \emph {et~al.}(2018)\citenamefont {{Turner}}, \citenamefont {{Michailidis}}, \citenamefont {{Abanin}}, \citenamefont {{Serbyn}},\ and\ \citenamefont {{Papi{\'c}}}}]{2018scars}%
  \BibitemOpen
  \bibfield  {author} {\bibinfo {author} {\bibfnamefont {C.~J.}\ \bibnamefont {{Turner}}}, \bibinfo {author} {\bibfnamefont {A.~A.}\ \bibnamefont {{Michailidis}}}, \bibinfo {author} {\bibfnamefont {D.~A.}\ \bibnamefont {{Abanin}}}, \bibinfo {author} {\bibfnamefont {M.}~\bibnamefont {{Serbyn}}},\ and\ \bibinfo {author} {\bibfnamefont {Z.}~\bibnamefont {{Papi{\'c}}}},\ }\bibfield  {title} {\bibinfo {title} {{Weak ergodicity breaking from quantum many-body scars}},\ }\href {https://doi.org/10.1038/s41567-018-0137-5} {\bibfield  {journal} {\bibinfo  {journal} {Nature Physics}\ }\textbf {\bibinfo {volume} {14}},\ \bibinfo {pages} {745} (\bibinfo {year} {2018})}\BibitemShut {NoStop}%
\bibitem [{\citenamefont {Serbyn}\ \emph {et~al.}(2021)\citenamefont {Serbyn}, \citenamefont {Abanin},\ and\ \citenamefont {Papi{\'c}}}]{serbyn2021quantum}%
  \BibitemOpen
  \bibfield  {author} {\bibinfo {author} {\bibfnamefont {M.}~\bibnamefont {Serbyn}}, \bibinfo {author} {\bibfnamefont {D.~A.}\ \bibnamefont {Abanin}},\ and\ \bibinfo {author} {\bibfnamefont {Z.}~\bibnamefont {Papi{\'c}}},\ }\bibfield  {title} {\bibinfo {title} {Quantum many-body scars and weak breaking of ergodicity},\ }\href@noop {} {\bibfield  {journal} {\bibinfo  {journal} {Nature Physics}\ }\textbf {\bibinfo {volume} {17}},\ \bibinfo {pages} {675} (\bibinfo {year} {2021})}\BibitemShut {NoStop}%
\bibitem [{\citenamefont {van Horssen}\ \emph {et~al.}(2015)\citenamefont {van Horssen}, \citenamefont {Levi},\ and\ \citenamefont {Garrahan}}]{horssen2015dynamics}%
  \BibitemOpen
  \bibfield  {author} {\bibinfo {author} {\bibfnamefont {M.}~\bibnamefont {van Horssen}}, \bibinfo {author} {\bibfnamefont {E.}~\bibnamefont {Levi}},\ and\ \bibinfo {author} {\bibfnamefont {J.~P.}\ \bibnamefont {Garrahan}},\ }\bibfield  {title} {\bibinfo {title} {Dynamics of many-body localization in a translation-invariant quantum glass model},\ }\href {https://doi.org/10.1103/PhysRevB.92.100305} {\bibfield  {journal} {\bibinfo  {journal} {Phys. Rev. B}\ }\textbf {\bibinfo {volume} {92}},\ \bibinfo {pages} {100305} (\bibinfo {year} {2015})}\BibitemShut {NoStop}%
\bibitem [{\citenamefont {Roy}\ and\ \citenamefont {Lazarides}(2020)}]{roy2020strong}%
  \BibitemOpen
  \bibfield  {author} {\bibinfo {author} {\bibfnamefont {S.}~\bibnamefont {Roy}}\ and\ \bibinfo {author} {\bibfnamefont {A.}~\bibnamefont {Lazarides}},\ }\bibfield  {title} {\bibinfo {title} {Strong ergodicity breaking due to local constraints in a quantum system},\ }\href {https://doi.org/10.1103/PhysRevResearch.2.023159} {\bibfield  {journal} {\bibinfo  {journal} {Phys. Rev. Res.}\ }\textbf {\bibinfo {volume} {2}},\ \bibinfo {pages} {023159} (\bibinfo {year} {2020})}\BibitemShut {NoStop}%
\bibitem [{\citenamefont {Brighi}\ \emph {et~al.}(2023)\citenamefont {Brighi}, \citenamefont {Ljubotina},\ and\ \citenamefont {Serbyn}}]{brighi2022hilbert}%
  \BibitemOpen
  \bibfield  {author} {\bibinfo {author} {\bibfnamefont {P.}~\bibnamefont {Brighi}}, \bibinfo {author} {\bibfnamefont {M.}~\bibnamefont {Ljubotina}},\ and\ \bibinfo {author} {\bibfnamefont {M.}~\bibnamefont {Serbyn}},\ }\bibfield  {title} {\bibinfo {title} {Hilbert space fragmentation and slow dynamics in particle-conserving quantum east models},\ }\bibfield  {journal} {\bibinfo  {journal} {SciPost Physics}\ }\textbf {\bibinfo {volume} {15}},\ \href {https://doi.org/10.21468/scipostphys.15.3.093} {10.21468/scipostphys.15.3.093} (\bibinfo {year} {2023})\BibitemShut {NoStop}%
\bibitem [{\citenamefont {Geissler}\ and\ \citenamefont {Garrahan}(2022)}]{geissler2022slow}%
  \BibitemOpen
  \bibfield  {author} {\bibinfo {author} {\bibfnamefont {A.}~\bibnamefont {Geissler}}\ and\ \bibinfo {author} {\bibfnamefont {J.~P.}\ \bibnamefont {Garrahan}},\ }\href@noop {} {\bibinfo {title} {Slow dynamics and non-ergodicity of the bosonic quantum east model in the semiclassical limit}} (\bibinfo {year} {2022}),\ \Eprint {https://arxiv.org/abs/2209.06963} {arXiv:2209.06963 [cond-mat.stat-mech]} \BibitemShut {NoStop}%
\bibitem [{\citenamefont {Klobas}\ \emph {et~al.}(2024)\citenamefont {Klobas}, \citenamefont {De~Fazio},\ and\ \citenamefont {Garrahan}}]{klobas2023exact}%
  \BibitemOpen
  \bibfield  {author} {\bibinfo {author} {\bibfnamefont {K.}~\bibnamefont {Klobas}}, \bibinfo {author} {\bibfnamefont {C.}~\bibnamefont {De~Fazio}},\ and\ \bibinfo {author} {\bibfnamefont {J.~P.}\ \bibnamefont {Garrahan}},\ }\bibfield  {title} {\bibinfo {title} {Exact pretransition effects in kinetically constrained circuits: Dynamical fluctuations in the floquet-east model},\ }\bibfield  {journal} {\bibinfo  {journal} {Physical Review E}\ }\textbf {\bibinfo {volume} {110}},\ \href {https://doi.org/10.1103/physreve.110.l022101} {10.1103/physreve.110.l022101} (\bibinfo {year} {2024})\BibitemShut {NoStop}%
\bibitem [{\citenamefont {Pancotti}\ \emph {et~al.}(2020)\citenamefont {Pancotti}, \citenamefont {Giudice}, \citenamefont {Cirac}, \citenamefont {Garrahan},\ and\ \citenamefont {Ba\~nuls}}]{pancotti2020quantum}%
  \BibitemOpen
  \bibfield  {author} {\bibinfo {author} {\bibfnamefont {N.}~\bibnamefont {Pancotti}}, \bibinfo {author} {\bibfnamefont {G.}~\bibnamefont {Giudice}}, \bibinfo {author} {\bibfnamefont {J.~I.}\ \bibnamefont {Cirac}}, \bibinfo {author} {\bibfnamefont {J.~P.}\ \bibnamefont {Garrahan}},\ and\ \bibinfo {author} {\bibfnamefont {M.~C.}\ \bibnamefont {Ba\~nuls}},\ }\bibfield  {title} {\bibinfo {title} {Quantum east model: Localization, nonthermal eigenstates, and slow dynamics},\ }\href {https://doi.org/10.1103/PhysRevX.10.021051} {\bibfield  {journal} {\bibinfo  {journal} {Phys. Rev. X}\ }\textbf {\bibinfo {volume} {10}},\ \bibinfo {pages} {021051} (\bibinfo {year} {2020})}\BibitemShut {NoStop}%
\bibitem [{\citenamefont {Bertini}\ \emph {et~al.}(2024{\natexlab{a}})\citenamefont {Bertini}, \citenamefont {Kos},\ and\ \citenamefont {Prosen}}]{bertini2024localised}%
  \BibitemOpen
  \bibfield  {author} {\bibinfo {author} {\bibfnamefont {B.}~\bibnamefont {Bertini}}, \bibinfo {author} {\bibfnamefont {P.}~\bibnamefont {Kos}},\ and\ \bibinfo {author} {\bibfnamefont {T.}~\bibnamefont {Prosen}},\ }\bibfield  {title} {\bibinfo {title} {Localized dynamics in the floquet quantum east model},\ }\href {https://doi.org/10.1103/PhysRevLett.132.080401} {\bibfield  {journal} {\bibinfo  {journal} {Phys. Rev. Lett.}\ }\textbf {\bibinfo {volume} {132}},\ \bibinfo {pages} {080401} (\bibinfo {year} {2024}{\natexlab{a}})}\BibitemShut {NoStop}%
\bibitem [{\citenamefont {Bertini}\ \emph {et~al.}(2024{\natexlab{b}})\citenamefont {Bertini}, \citenamefont {De~Fazio}, \citenamefont {Garrahan},\ and\ \citenamefont {Klobas}}]{bertini2024exact}%
  \BibitemOpen
  \bibfield  {author} {\bibinfo {author} {\bibfnamefont {B.}~\bibnamefont {Bertini}}, \bibinfo {author} {\bibfnamefont {C.}~\bibnamefont {De~Fazio}}, \bibinfo {author} {\bibfnamefont {J.~P.}\ \bibnamefont {Garrahan}},\ and\ \bibinfo {author} {\bibfnamefont {K.}~\bibnamefont {Klobas}},\ }\bibfield  {title} {\bibinfo {title} {Exact quench dynamics of the floquet quantum east model at the deterministic point},\ }\href {https://doi.org/10.1103/PhysRevLett.132.120402} {\bibfield  {journal} {\bibinfo  {journal} {Phys. Rev. Lett.}\ }\textbf {\bibinfo {volume} {132}},\ \bibinfo {pages} {120402} (\bibinfo {year} {2024}{\natexlab{b}})}\BibitemShut {NoStop}%
\bibitem [{\citenamefont {Cirac}\ \emph {et~al.}(2021)\citenamefont {Cirac}, \citenamefont {P\'erez-Garc\'{\i}a}, \citenamefont {Schuch},\ and\ \citenamefont {Verstraete}}]{cirac2020matrix}%
  \BibitemOpen
  \bibfield  {author} {\bibinfo {author} {\bibfnamefont {J.~I.}\ \bibnamefont {Cirac}}, \bibinfo {author} {\bibfnamefont {D.}~\bibnamefont {P\'erez-Garc\'{\i}a}}, \bibinfo {author} {\bibfnamefont {N.}~\bibnamefont {Schuch}},\ and\ \bibinfo {author} {\bibfnamefont {F.}~\bibnamefont {Verstraete}},\ }\bibfield  {title} {\bibinfo {title} {Matrix product states and projected entangled pair states: Concepts, symmetries, theorems},\ }\href {https://doi.org/10.1103/RevModPhys.93.045003} {\bibfield  {journal} {\bibinfo  {journal} {Rev. Mod. Phys.}\ }\textbf {\bibinfo {volume} {93}},\ \bibinfo {pages} {045003} (\bibinfo {year} {2021})}\BibitemShut {NoStop}%
\bibitem [{\citenamefont {Nielsen}\ and\ \citenamefont {Chuang}(2010)}]{Nielsen2010}%
  \BibitemOpen
  \bibfield  {author} {\bibinfo {author} {\bibfnamefont {M.~A.}\ \bibnamefont {Nielsen}}\ and\ \bibinfo {author} {\bibfnamefont {I.~L.}\ \bibnamefont {Chuang}},\ }\href@noop {} {\emph {\bibinfo {title} {Quantum Computation and Quantum Information: 10th Anniversary Edition}}}\ (\bibinfo  {publisher} {Cambridge University Press},\ \bibinfo {year} {2010})\BibitemShut {NoStop}%
\bibitem [{\citenamefont {Fazio}\ \emph {et~al.}(2024)\citenamefont {Fazio}, \citenamefont {Garrahan},\ and\ \citenamefont {Klobas}}]{defazio2024exact}%
  \BibitemOpen
  \bibfield  {author} {\bibinfo {author} {\bibfnamefont {C.~D.}\ \bibnamefont {Fazio}}, \bibinfo {author} {\bibfnamefont {J.~P.}\ \bibnamefont {Garrahan}},\ and\ \bibinfo {author} {\bibfnamefont {K.}~\bibnamefont {Klobas}},\ }\href {https://arxiv.org/abs/2406.17464} {\bibinfo {title} {Exact results on the dynamics of the stochastic floquet-east model}} (\bibinfo {year} {2024}),\ \Eprint {https://arxiv.org/abs/2406.17464} {arXiv:2406.17464 [cond-mat.stat-mech]} \BibitemShut {NoStop}%
\bibitem [{\citenamefont {Sharipov}\ \emph {et~al.}(2025)\citenamefont {Sharipov}, \citenamefont {Koterle}, \citenamefont {Grozdanov},\ and\ \citenamefont {Prosen}}]{sharipov2025ergodicbehaviorsreversible3state}%
  \BibitemOpen
  \bibfield  {author} {\bibinfo {author} {\bibfnamefont {R.}~\bibnamefont {Sharipov}}, \bibinfo {author} {\bibfnamefont {M.}~\bibnamefont {Koterle}}, \bibinfo {author} {\bibfnamefont {S.}~\bibnamefont {Grozdanov}},\ and\ \bibinfo {author} {\bibfnamefont {T.}~\bibnamefont {Prosen}},\ }\href {https://arxiv.org/abs/2503.16593} {\bibinfo {title} {Ergodic behaviors in reversible 3-state cellular automata}} (\bibinfo {year} {2025}),\ \Eprint {https://arxiv.org/abs/2503.16593} {arXiv:2503.16593 [cond-mat.stat-mech]} \BibitemShut {NoStop}%
\bibitem [{\citenamefont {Yu}\ \emph {et~al.}(2024)\citenamefont {Yu}, \citenamefont {Wang},\ and\ \citenamefont {Kos}}]{Yu_2024}%
  \BibitemOpen
  \bibfield  {author} {\bibinfo {author} {\bibfnamefont {X.-H.}\ \bibnamefont {Yu}}, \bibinfo {author} {\bibfnamefont {Z.}~\bibnamefont {Wang}},\ and\ \bibinfo {author} {\bibfnamefont {P.}~\bibnamefont {Kos}},\ }\bibfield  {title} {\bibinfo {title} {Hierarchical generalization of dual unitarity},\ }\href {https://doi.org/10.22331/q-2024-02-20-1260} {\bibfield  {journal} {\bibinfo  {journal} {Quantum}\ }\textbf {\bibinfo {volume} {8}},\ \bibinfo {pages} {1260} (\bibinfo {year} {2024})}\BibitemShut {NoStop}%
\end{thebibliography}%

%\clearpage
%--------------------------------APPENDIX
\appendix

\setcounter{equation}{0}
\setcounter{figure}{0}
\setcounter{table}{0}
\makeatletter

\onecolumngrid
\section*{Appendix}
\appendix

\section{Characterization of the family fulfilling Eqs.~\eqref{eq:CS} and \eqref{eq:BCS}}
\label{app:general_gates}

In this appendix, we use operator-Schmidt decomposition to prove that gates satisfying the controlled-stochastic and controlled-bistochastic conditions must have the form given in Eqs.~\eqref{eq:CS} and \eqref{eq:BCS} of the main text. In fact, we consider a slight generalization of these conditions. We begin with: 

 \begin{theorem}
    \label{thm:characterization_full}
    Let $U$ be a two-site gate acting on $\mathcal{H} \otimes \mathcal{H}$ where $\mathcal{H} \simeq \mathbb C^d$. Then $U$ satisfies the controlled-stochastic condition
    \begin{align}
        U(\mathbbm{1} \otimes \ket{-}) = c \otimes \ket{-},
        \label{eq:CS_app_def}
    \end{align}
    for some single-site operator $c$ on site 1 if and only if $U$ can be written as
    \begin{align}
        U = \sum_{\alpha=1}^{r} c_\alpha \otimes u_\alpha, \qquad r \le q^2+1,
        \label{eq:general_form_app}
    \end{align}
    where $c_\alpha$ are operators on site 1 satisfying $\sum_\alpha c_\alpha = c$, and $u_\alpha$ are operators on site 2 satisfying $u_\alpha\ket{-} = \ket{-}$.
\end{theorem}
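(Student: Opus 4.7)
The ``if'' direction is immediate and I would dispatch it in one line: if $U = \sum_\alpha c_\alpha \otimes u_\alpha$ with $u_\alpha \ket{-} = \ket{-}$ and $\sum_\alpha c_\alpha = c$, then $U(\mathbbm{1} \otimes \ket{-}) = \sum_\alpha c_\alpha \otimes \ket{-} = c \otimes \ket{-}$. All the work is in the converse.

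For the converse, my plan is to start from the operator-Schmidt decomposition $U = \sum_{\alpha=1}^{r_0} a_\alpha \otimes b_\alpha$, with $\{a_\alpha\}$ and $\{b_\alpha\}$ each linearly independent and $r_0 \le q^2$. Acting with $\mathbbm{1} \otimes \ket{-}$ yields $\sum_\alpha a_\alpha \otimes (b_\alpha \ket{-}) = c \otimes \ket{-}$. I will decompose $b_\alpha \ket{-} = \lambda_\alpha \ket{-} + \ket{w_\alpha}$ with $\ket{w_\alpha} \perp \ket{-}$, and project the identity onto the $\ket{-}^\perp$ subspace in the second tensor factor; linear independence of $\{a_\alpha\}$ should then force every $\ket{w_\alpha} = 0$. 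Projecting onto $\ket{-}$ gives $c = \sum_\alpha \lambda_\alpha a_\alpha$, so each $b_\alpha$ preserves $\ket{-}$ up to a scalar.

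Next I will split the indices into ``regular'' ones with $\lambda_\alpha \neq 0$ and ``null'' ones $\alpha \in S$ with $b_\alpha \ket{-} = 0$. For regular $\alpha$, I set $c_\alpha = \lambda_\alpha a_\alpha$ and $u_\alpha = b_\alpha/\lambda_\alpha$, so $u_\alpha \ket{-} = \ket{-}$ holds automatically. For null $\alpha$ I use the rewriting $a_\alpha \otimes b_\alpha = a_\alpha \otimes (b_\alpha + \mathbbm{1}) - a_\alpha \otimes \mathbbm{1}$: the first piece has right factor $b_\alpha + \mathbbm{1}$ which fixes $\ket{-}$, and the subtractive pieces share right factor $\mathbbm{1}$ and hence collapse into a single extra term $\bigl(-\sum_{\alpha \in S} a_\alpha\bigr) \otimes \mathbbm{1}$ that also fixes $\ket{-}$. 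A short bookkeeping check confirms $\sum_\alpha c_\alpha = c$ (the null contributions cancel against the extra term), and the total number of terms is at most $r_0 + 1 \le q^2 + 1$, matching the claimed bound.

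The main subtlety I anticipate is the linear-independence step: the identity $\sum_\alpha a_\alpha \otimes \ket{w_\alpha} = 0$ forces each $\ket{w_\alpha} = 0$ precisely because tensoring with a linearly independent family is injective on the remaining factor, and this is the only place the proper Schmidt-decomposition convention really bites. The extra ``$+1$'' in the bound is simply the price of the possible null indices --- when $S = \emptyset$ the sharper $r_0 \le q^2$ bound already applies, but the stated form handles all cases uniformly.
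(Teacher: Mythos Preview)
Your proposal is correct and follows essentially the same route as the paper: operator-Schmidt decomposition, use of linear independence of the first-factor family to force $b_\alpha\ket{-}\propto\ket{-}$, splitting into regular and null indices, and the add--subtract--identity trick for the null ones producing a single extra $\mathbbm{1}$ term and the bound $r\le q^2+1$. The only cosmetic differences are that the paper keeps the Schmidt coefficients explicit and phrases the linear-independence step as expanding $c$ in the $A_\alpha$ basis rather than projecting onto $\ket{-}^\perp$; these are equivalent.
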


\begin{proof}
    We first prove the ``if'' direction (sufficiency), then the ``only if'' direction (necessity).\\

    \noindent \underline{Sufficiency.} Suppose $U = \sum_\alpha c_\alpha \otimes u_\alpha$ with $u_\alpha\ket{-} = \ket{-}$ and $\sum_\alpha c_\alpha = c$. Then we have
    \begin{align}
        U(\mathbbm{1} \otimes \ket{-})
         &= \sum_\alpha \left(c_\alpha \otimes u_\alpha\right) \left(\mathbbm{1} \otimes \ket{-}\right) 
         = \sum_\alpha (c_\alpha) \otimes \ket{-} 
         = \left(\sum_\alpha c_\alpha\right) \otimes \ket{-} 
         = (c) \otimes \ket{-}.
         \label{eq:sufficiency}
    \end{align}
    
     \noindent \underline{Necessity.} We begin by writing the operator-Schmidt decomposition
    \begin{align}
        U = \sum_{\alpha=1}^{r'} \lambda_\alpha A_\alpha \otimes B_\alpha,
    \end{align}
    where $r' \le q^2$, $\lambda_\alpha > 0$, and the operator families $\{A_\alpha\}$, $\{B_\alpha\}$ are linearly independent sets of operators on sites 1 and 2 respectively. Applying now Eq.~\eqref{eq:CS_app_def} we then have
    \begin{align}
        U(\mathbbm{1} \otimes \ket{-})
         &= \sum_{\alpha=1}^{r'} \lambda_\alpha (A_\alpha) \otimes \left(B_\alpha \ket{-}\right) = (c) \otimes \ket{-}.
        \label{eq:condition_expanded_app}
    \end{align}
Since the operators $\{A_\alpha\}$ are linearly independent, we can expand $c$ in terms of them to obtain 
\be
c = \sum_{\alpha=1}^{r'} \beta_\alpha A_\alpha,
\label{eq:cexp}
\ee
where we used that any operator linearly independent from $\{A_\alpha\}$ cannot appear in the expansion otherwise Eq.~\eqref{eq:condition_expanded_app} would be violated. Imposing Eq.~\eqref{eq:condition_expanded_app} we then have 
\begin{equation}
    B_\alpha |-\rangle = \mu_\alpha |-\rangle, \quad \text{for scalars } \mu_\alpha \in \mathbb{C},
\end{equation}
where we set $\mu_\alpha = \beta_\alpha/\lambda_\alpha$. Let us now consider the bipartition 
\be
\{1,\ldots,r'\}= S \cup S',
\ee
where $S$ corresponds to the values of $\alpha$ such that $\mu_\alpha\neq 0$ while $S'$ to those giving $\mu_\alpha = \beta_\alpha= 0$. Therefore, for $\alpha\in S$ we can define the stochastic matrix $u_\alpha$ as
\be
u_\alpha= \frac{1}{\mu_{\alpha}} B_\alpha, 
\ee
while for $\alpha \in S'$ we set 
\be
 u'_\alpha= B_\alpha. 
\ee
These matrices annihilate the flat state when acting upon it from the left. Writing
\be
\label{eq:uuprime}
u_\alpha = u_\alpha' + \Id, \qquad \alpha\in S', 
\ee
we then have that $u_\alpha$ are stochastic. Therefore we can write
\begin{align}
U &= \sum_{\alpha\in S} \lambda_\alpha \mu_\alpha A_\alpha \otimes u_\alpha + \sum_{\alpha\in S'} \lambda_\alpha A_\alpha \otimes u_\alpha' \label{eq:Usto_firstline} \\
& = \sum_{\alpha=1}^{r'} c_\alpha \otimes u_\alpha  - \sum_{\alpha\in S'} \lambda_\alpha A_\alpha \otimes \Id \\
& = \sum_{\alpha=1}^{r} c_\alpha \otimes u_\alpha\,.
\end{align}
where in the first step we set 
\be
c_\alpha = \begin{cases}
\lambda_\alpha \mu_\alpha A_\alpha & \alpha\in S\\
\lambda_\alpha A_\alpha & \alpha\in S'\\
\end{cases}\, ,
\ee
and in the second one 
\be
c_r =  - \sum_{\alpha\in S'} \lambda_\alpha A_\alpha, \qquad  u_r = \Id\,, \qquad r=r'+1\,. 
\ee
Recalling Eq.~\eqref{eq:cexp} we finally have 
\be
\sum_{\alpha=1}^r  c_\alpha = \sum_{\alpha=1}^{r'} \beta_\alpha A_\alpha = c, 
\ee
which concludes the proof. 
\end{proof}

Next, we show: 

 \begin{theorem}
    \label{thm:characterization_full_2}
    If, in addition to the hypothesis of Thm.~\ref{thm:characterization_full}, the gate $U$ also fulfills 
        \begin{align}
        (\mathbbm{1} \otimes \bra{-}) U = c \otimes \bra{-},
        \label{eq:BCS_app_def}
    \end{align}
 the matrices $\{u_\alpha\}_{\alpha=1}^r$ in the decomposition in Eq.~\eqref{eq:general_form_app} are bistochastic.
\end{theorem}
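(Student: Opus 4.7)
The plan is to recycle the operator-Schmidt decomposition already used in the proof of Theorem~\ref{thm:characterization_full} and show that the additional condition~\eqref{eq:BCS_app_def} pins down the left action of each Schmidt factor $B_\alpha$ on $\bra{-}$ in a way that is compatible with the right-action constraints already established.

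First, I would start again from $U=\sum_{\alpha=1}^{r'}\lambda_\alpha A_\alpha\otimes B_\alpha$ with $\{A_\alpha\}$ and $\{B_\alpha\}$ linearly independent, and apply $(\mathbbm{1}\otimes\bra{-})$ from the left. Matching the result against $c\otimes\bra{-}$ and using linear independence of $\{A_\alpha\}$, I would deduce that $\bra{-}B_\alpha=\nu_\alpha\bra{-}$ for some scalars $\nu_\alpha\in\mathbb{C}$, together with $c=\sum_\alpha\lambda_\alpha\nu_\alpha A_\alpha$. This is the exact mirror of the step that produced the scalars $\mu_\alpha$ in the proof of Theorem~\ref{thm:characterization_full}.

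Next I would compare this expression for $c$ with the one already obtained there, namely $c=\sum_\alpha\lambda_\alpha\mu_\alpha A_\alpha$, where $\mu_\alpha$ is defined by $B_\alpha\ket{-}=\mu_\alpha\ket{-}$. Linear independence of $\{A_\alpha\}$ forces $\mu_\alpha=\nu_\alpha$ for every $\alpha$, so the left and right eigenvalues of $B_\alpha$ on the flat state coincide. In particular, $\mu_\alpha=0$ if and only if $\nu_\alpha=0$, so the bipartition $\{1,\ldots,r'\}=S\cup S'$ used in Theorem~\ref{thm:characterization_full} is unchanged by the new constraint.

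Finally I would verify bistochasticity block by block for the $u_\alpha$ constructed in Theorem~\ref{thm:characterization_full}. For $\alpha\in S$, $u_\alpha=B_\alpha/\mu_\alpha$ gives $\bra{-}u_\alpha=(\nu_\alpha/\mu_\alpha)\bra{-}=\bra{-}$. For $\alpha\in S'$, $\mu_\alpha=\nu_\alpha=0$ implies $\bra{-}u'_\alpha=0$, so the shifted operator $u_\alpha=u'_\alpha+\mathbbm{1}$ defined in~\eqref{eq:uuprime} also satisfies $\bra{-}u_\alpha=\bra{-}$. The extra term $u_r=\mathbbm{1}$ appended at the very end of that proof is trivially bistochastic. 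The only mildly delicate point I anticipate is making sure the identification $\mu_\alpha=\nu_\alpha$ is applied over the full index set $\{1,\ldots,r'\}$ rather than only over $S$; this is essential because the $\alpha\in S'$ case requires precisely $\nu_\alpha=0$, not merely that $u'_\alpha$ preserves $\ket{-}$ trivially by construction.
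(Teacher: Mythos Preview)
Your proof is correct and follows essentially the same route as the paper: both apply $(\mathbbm{1}\otimes\bra{-})$ to the Schmidt decomposition, use linear independence of $\{A_\alpha\}$ to extract the left action of each $B_\alpha$ (or equivalently $u_\alpha,u'_\alpha$) on $\bra{-}$, and then check the $S$, $S'$, and $\alpha=r$ cases separately. The only cosmetic difference is that you introduce the auxiliary scalars $\nu_\alpha$ and explicitly match $\nu_\alpha=\mu_\alpha$, whereas the paper works directly with the already-rescaled decomposition in Eq.~\eqref{eq:Usto_firstline} and reads off $\bra{-}u_\alpha=\bra{-}$ and $\bra{-}u'_\alpha=0$ in one step.
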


    \begin{proof}
        We act with $\mathbbm{1} \otimes \bra{-} $ on LHS and RHS of Eq.~\eqref{eq:BCS_app_def} and use Eq.~\eqref{eq:Usto_firstline} to obtain
        \begin{align}
        \text{RHS}& = (c \otimes \bra{-})= \left( \sum_{\alpha\in S} \lambda_\alpha \mu_\alpha A_\alpha \right)  \otimes \bra{-}\, ,\\
        \text{LHS}&= \sum_{\alpha\in S} \lambda_\alpha \mu_\alpha A_\alpha \otimes \bra{-} u_\alpha +\sum_{\alpha\in S'} \lambda_\alpha A_\alpha \otimes \bra{-} u'_\alpha .
        \end{align}
Since $\{A_\alpha\}$ are linearly independent we obtain  
\begin{align}
\bra{-} u_\alpha = \bra{-}, \qquad  \bra{-} u'_\alpha = 0\,.
\end{align}
Therefore, using the decomposition in Eq.~\eqref{eq:uuprime} we have that all $\{u_\alpha\}_{\alpha=1}^r$ appearing in Eq.~\eqref{eq:general_form_app} are bistochastic. This concludes the proof. 
\end{proof}

\section{Generalized condition}
\label{sec:appgencond}

The simplifications discussed in Sec.~\ref{sec:simplifications} can be repeated in the case of control gates fulfilling Eqs.~\eqref{eq:CS_app_def} and \eqref{eq:BCS_app_def} with $c$ bistochastic. Namely 
\begin{align}
\label{eq:CSAPP}
& U\ket{-}_2 = (c)_1 \ket{-}_2, & &\begin{tikzpicture}[baseline={([yshift=-0.6ex]current bounding box.center)},scale=0.65]
    \prop{0}{0}{colU}
    \MEh{0.5}{-0.5}
  \end{tikzpicture} = 
   \begin{tikzpicture}[baseline={([yshift=-0.6ex]current bounding box.center)},scale=0.65]
     \gridLine{0}{-0.5}{0}{0.5}
     \gridLine{0.5}{0}{0.5}{0.5}
     \sCircle{0}{.0}{yellow}
         \MEh{0.5}{0}
  \end{tikzpicture}\,, \\
  \label{eq:BCSAPP}
  &\bra{-}_2 U = (c)_1 \bra{-}_2,
   & &\begin{tikzpicture}[baseline={([yshift=-0.6ex]current bounding box.center)},scale=0.65]
    \prop{0}{0}{colU}
    \MEh{0.5}{0.5}
  \end{tikzpicture} = 
   \begin{tikzpicture}[baseline={([yshift=-0.6ex]current bounding box.center)},scale=0.65]
     \gridLine{0}{-0.5}{0}{0.5}
     \gridLine{0.5}{0}{0.5}{-0.5}
      \MEh{0.5}{0}
       \sCircle{0}{.0}{yellow}
         \MEh{0.5}{0}
  \end{tikzpicture},
\end{align}
where 
\begin{align}
& \bra{-} c = \bra{-}, 
 & &  \begin{tikzpicture}[baseline={([yshift=-0.6ex]current bounding box.center)},scale=0.65]
     \gridLine{0}{-0.5}{0}{0.5}
    % \gridLine{0.5}{0}{0.5}{-0.5}
      \MEh{0}{.5}
       \sCircle{0}{.0}{yellow}
  \end{tikzpicture} =  \begin{tikzpicture}[baseline={([yshift=-0.6ex]current bounding box.center)},scale=0.65]
     \gridLine{0}{0}{0}{0.5}
    % \gridLine{0.5}{0}{0.5}{-0.5}
      \MEh{0}{.5}
  \end{tikzpicture}, \\
  & c\ket{-}=\ket{-},  
 & &  \begin{tikzpicture}[baseline={([yshift=-0.6ex]current bounding box.center)},scale=0.65]
     \gridLine{0}{-0.5}{0}{0.5}
    % \gridLine{0.5}{0}{0.5}{-0.5}
      \MEh{0}{-.5}
       \sCircle{0}{.0}{yellow}
  \end{tikzpicture} =  \begin{tikzpicture}[baseline={([yshift=-0.6ex]current bounding box.center)},scale=0.65]
     \gridLine{0}{-0.5}{0}{0}
    % \gridLine{0.5}{0}{0.5}{-0.5}
      \MEh{0}{-.5}
  \end{tikzpicture}.
\end{align}
These conditions still lead to the same simplifications of the correlations. Sometimes,  we can use different pairing of single site gates in the brickwork, to reduce the gates to the form in the main text. One such example is $c_\alpha= a\ketbra{\alpha}b$, with $a,b$ single site bistochastic gates. In this case we can shift $a,b$ to $u_\alpha$ leading to a local gate $U'=\sum_\alpha \ketbra{\alpha} \otimes b u_\alpha a$.

\section{An example of random controlled gates leading to tilted East gate}
\label{sec:GenTiltedEast}
Consider two-qubit gates
\be
W = (w_1 \otimes w_2) ( \Id \otimes e^{i \beta X} ) e^{i J Z\otimes Z} (\Id \otimes e^{i \beta Y}) D (w_3\otimes w_4)
\ee
where $D$ a generic diagonal unitary matrix, while $\{w_j\}$ are $U(1)$ matrices of the form 
\be
w_j = e^{i \phi_j Z}\,. 
\ee
Note that the floquet quantum East gate introduced in~\cite{bertini2024localised} can be written in this form with $J=\pi/4$ (up to a left-right flip).

Averaging over random $\{w_j\}$ and following the procedure from Sec.~\ref{sec:Tostochastic} produces the gate 
\begin{equation*}
   U = \begin{tikzpicture}[baseline={([yshift=-0.6ex]current bounding box.center)},scale=0.65]
    \prop{0}{0}{colU}
  \end{tikzpicture} 
%=  \Id  \otimes \ketbra{\mcirc} +
%\begin{pmatrix}
%\cos(\beta)^2 & \sin(\beta)^2 \sin(2J) \\ 
%\sin(\beta)^2 \sin(2J) & \cos(\beta)^2  \\ 
%\end{pmatrix} \otimes \ketbra{\mcircf} \,.
=  \Id  \otimes \ketbra{\mcirc} + (1-\cos(\beta)^2 (1-\sin(2J))) \ketbra{0}{0}\otimes \ketbra{\mcircf}+(1-\cos(\beta)^2 (1+\sin(2J))) \ketbra{1}{1}\otimes \ketbra{\mcircf}.
\label{eq:generalisation2}
\end{equation*}
Let us introduce $p=\sin(\beta)^2$ and $e^{-s}=\cos(2J)$:
\be
U=  \Id  \otimes \ketbra{\mcirc} + (1-p (1-e^{-s})) \ketbra{0}{0}\otimes \ketbra{\mcircf}+(1-p (1+e^{-s})) \ketbra{1}{1}\otimes \ketbra{\mcircf}.
\label{eq:generalisation4}
\ee
\begin{comment}
Written entirely in the basis of $\ket{0},\ket{1}$ for $J=\pi/4$, the gate is:
\be
  U = \begin{pmatrix}
     1&0&0&0\\
     0&1&0&0\\
     0&0&1-p&p\\
     0&0&p&1-p
 \end{pmatrix}
\ee
\end{comment}
Note that if we write  Eq.~\eqref{eq:generalisation4} in the basis of $\ket{\mcirc},\ket{\mcircf}$ (perform Hadamard transformation on basis $\ket{0},\ket{1}$), we obtain the tilted stochastic East model studied in~\cite{defazio2024exact}:
\be
  U = \begin{pmatrix}
     1&0&0&0\\
     0&1&0&0\\
     0&0&1-p&p \ e^{-s}\\
     0&0&p \ e^{-s}&1-p
 \end{pmatrix}
\ee
For our physical values we can only have \emph{positive} $s$ tilting parameters in the parametrization of Ref.~\cite{defazio2024exact}, thus only probing the inactive phase.

\end{document}